\newlist{inparaenum}{enumerate}{2}
\setlist[inparaenum]{nosep}
\setlist[inparaenum,1]{label=\bfseries\arabic*.}
\setlist[inparaenum,2]{label=\arabic{inparaenumi}\emph{\alph*})}
\newcommand{\be}{\begin{equation}}
\newcommand{\ee}{\end{equation}}
\newcommand{\mbb}[1]{\mathbb{#1}}
\newcommand{\mcal}[1]{\mathcal{#1}}
\theoremstyle{plain}
\newtheorem{theorem}{Theorem}[section]
\newtheorem{lemma}{Lemma}[section]
\newtheorem{fact}{Fact}[section]
\newtheorem{assumption}{Assumption}
\newtheorem{example}{Example}
\newtheorem{definition}{Definition}
\def\bs{\boldsymbol}
\def\mce{\mcal{E}}
\def\mbeta{\beta}
\def\abeta{\beta_a}
\def\heal{\alpha}
\def\s{s}
\def\i{i}
\def\iIFE{\i_\text{IFE}}
\def\IFE{\bs{i}_\text{IFE}}
\def\ipeak{i_\text{pk}}
\normalsize\title{\LARGE \bf
	Analysis of Contagion Dynamics with Active Cyber Defenders
	\thanks{This work was supported in part by 
Colorado State Bill 18-086, and NSF Grants \#2122631 and \#2115134. The authors are with the Department of Computer Science at University of Colorado, Colorado Springs. Contact: \texttt{\{kpaarpor,sxu\}@uccs.edu}}}
\author{
	Keith Paarporn, Shouhuai Xu
}
\begin{document}

\maketitle

\begin{abstract}

In this paper, we analyze the infection spreading dynamics of malware in a population of cyber nodes (i.e., computers or devices). Unlike most prior studies where nodes are reactive to infections, in our setting some nodes are \emph{active defenders} meaning that they are able to clean up malware infections of their neighboring nodes, much like how spreading malware exploits the network connectivity properties in order to propagate. We formulate these dynamics as an Active Susceptible-Infected-Susceptible (A-SIS) compartmental model of contagion. We completely characterize the system's asymptotic behavior by establishing conditions for the global asymptotic stability of the infection-free equilibrium and for an endemic equilibrium state. We show that the presence of active defenders counter-acts infectious spreading, effectively increasing the epidemic threshold on parameters for which an endemic state prevails. Leveraging this characterization, we investigate a general class of problems for finding optimal investments in active cyber defense capabilities given limited resources. We show that this class of problems has unique solutions under mild assumptions. We then analyze an  Active Susceptible-Infected-Recovered (A-SIR) compartmental model, where the peak infection level of any trajectory is explicitly derived.
    
    

\end{abstract}

\section{Introduction}

The spread of computer malware and viruses remains a major cause for concern, despite the tremendous amount of effort by academia, industry, and government.
This is true despite the substantial progresses in 
certain areas of cybersecurity such as cryptography, intrusion detection, firewalls, and anti-malware tools. 
These traditional cyber defense approaches are {\em preventive} and {\em reactive} in nature because they strive to prevent attacks from succeeding and react to recognized attacks \cite{xu2014adaptive,zheng2017preventive}. However, it is known that cyber attacks cannot be be completely prevented, for reasons that include undecidability \cite{adleman1990abstract} and human factors \cite{montanez2020human}. Moreover, reactive defenses are limited because there may be substantial delays before attacks are detected and cleaned up.


The limitation of traditional defenses is characterized by an asymmetry that benefits attackers. Namely, the effect of attacks is amplified by the network's connectivity (malware spreading), but the effect of preventive and reactive defenses are not \cite{xu2014adaptive,zheng2017preventive,xu2015stochastic,lu2013optimizing,zheng2015active}. This asymmetry has led to an emerging class of countermeasures called {\em active cyber defenses} \cite{xu2015stochastic,lu2013optimizing,zheng2015active,theron2019autonomous,theron2020reference}, which leverage the same interconnections exploited by attacks to actively identify and clean up compromised nodes. This is achieved by endowing uncompromised nodes the ability to ``hunt" compromised nodes to clean up their infection status (or ``remotely delivering cures'').



This paper investigates the impact that active defenders have on the spread of malware. The spreading dynamics of malware draws parallels to the spread of an infectious disease in a human population, and as such, basic models in epidemiology are often utilized to study cybersecurity dynamics \cite{ganesh2005effect,van2008virus,zheng2017preventive,han2021preventive,varma2022non,mai2022optimal}. The dynamics of these models often follow an epidemic threshold, such that parameter instances that lie below the threshold characterize an infection-free equilibrium, and instances above the threshold exhibits an endemic equilibrium, where a constant fraction is infected over long periods of time. In this paper, we formulate cybersecurity dynamics as the A-SIS compartmental model, where nodes are either {\em susceptible} to, or {\em infected} by, the malware, and some
nodes are {\em active} defenders.

There have been studies on characterizing the effectiveness of active defenders, mainly from a holistic (i.e., network-oriented) perspective \cite{xu2015stochastic,lu2013optimizing,zheng2015active}. A main focus has been on studying their advantage over preventive and reactive cyber defenses \cite{xu2015stochastic}, and their potential side-effects \cite{zheng2015active}. These characterization works study mean-field approximations and provide conditions for the existence or absence of equilibria. A prescription study has derived optimal control strategies for active cyber defenses in a population contagion model \cite{lu2013optimizing}, where it is assumed that {\em every} node in the network has active defense capabilities and no nodes are equipped with reactive defenses. This motivates us to consider more realistic scenarios, where nodes are equipped with reactive defense capabilities and some nodes can be equipped with active defense capabilities. From a practical standpoint in terms of materializing the potential of active defenses in the real world, ongoing work explores competent architectures for implementing active defense approaches \cite{theron2020reference}, and systematizing the challenges that must be tackled before its full potential can be realized \cite{XuAICA2022}.

In relation to the literature on epidemics, our work is similar to the dynamics of competitive bi-virus models \cite{liu2019analysis,santos2015bi,doshi2022bi}, though the mathematical equations differ in two respects. First, there are only two compartments (susceptible and infected) in our model of active defenses, whereas bi-virus models must account for three compartments. Second, active cyber defenses have inherent asymmetries, since it is possible that only a fraction of the nodes may implement them. In bi-virus models, any node that is infected with a given virus type may spread it.

\vspace{1mm}

\noindent\textbf{Contributions:} Our study abstracts away the underlying complex network structures originally considered in \cite{xu2015stochastic,zheng2015active}. Our A-SIS (Active SIS) model is based on a well-mixed  population of nodes, where each node has the same rate of interaction with others. Under these assumptions, our paper provides a full characterization of the dynamical properties of the A-SIS contagion model. We precisely characterize its epidemic threshold, which increases in the fraction and effectiveness of active defenders (Theorem \ref{thm:IFE_GAS}). We note that this is in contrast to many other studies of epidemic models with reactive population behaviors, i.e. non-pharmaceutical interventions such as social distancing, where the epidemic threshold remains unchanged from classic non-behavioral models \cite{paarporn2017networked,eksin2019systematic,reluga2010game}. Additionally, we fully characterize the global stability properties of the equilibrium states of the system by identifying suitable Lyapunov functions (Section \ref{sec:stability}). Specifically, these equilibria are the infection-free equilibrium (IFE) and the endemic equilibrium states, where we provide precise infection levels for the latter. Based on these characterizations, we then consider how a designer should optimize system security by investing monetary assets in increasing the fraction of active defense nodes and their effectiveness (Section \ref{sec:investment}). We show that there is a unique optimal investment profile, given concave return functions.

\section{The A-SIS contagion model}

Our model of active cyber defense is built upon the classic compartmental SIS model, which we first review below. 

\subsection{The SIS Epidemic Model}

A malware spreads through a population of nodes, where each node is either susceptible or infected with the malware.. We denote $\s(t) \in [0,1]$ as the fraction of nodes in the network that are susceptible at time $t$, and $\i(t) \in [0,1]$ as the fraction that are infected. The malware can only be transmitted from an infected node to a susceptible node upon contact. The per-contact infection rate is $\mbeta > 0$. Infected nodes are able to independently recover at the rate $\heal > 0$ by using reactive defenses, e.g. by using recovery software, intrusion-detection system, or anti-malware tool. The states $\s$ and $\i$ evolve according to the following dynamical system:

\begin{equation}
    \begin{aligned}
        \frac{d\s}{dt} &= -\mbeta \s\i + \heal \i \\
        \frac{d\i}{dt} &= \mbeta \s\i - \heal \i 
    \end{aligned}
\end{equation}
Under these dynamics, the mass of the population is invariant, and we must have $s(t) + i(t) = 1$ for all times $t$. Therefore, the SIS dynamics may be reduced to a single state variable,
\begin{equation}\label{eq:SIS_dynamics}
    \frac{d\i}{dt} = \mbeta \i(1-\i) - \heal \i 
\end{equation}
with initial condition $i(0) \in [0,1]$. The \emph{infection-free equilibrium} (IFE) $\iIFE = 0$ is an equilibrium of the system. The \emph{endemic} equilibrium $i^* = 1 - \frac{\heal}{\mbeta} \in (0,1]$ is an equilibrium of the system if and only if $\frac{\mbeta}{\heal} > 1$. The solution to this system is well-known. Its asymptotic properties are summarized.
\begin{theorem}[\cite{mei2017dynamics}]\label{thm:SIS}
    Consider the SIS model \eqref{eq:SIS_dynamics}.
    \begin{itemize}
        \item If $\frac{\mbeta}{\heal} \leq 1$, then $\i(t)$ converges to $\iIFE$ for any initial condition $i(0) \in [0,1]$.
        \item If $\frac{\mbeta}{\heal} > 1$, then $\i(t)$ converges to the \emph{endemic state} $i^* = 1 - \heal/\mbeta$ for any initial condition $i(0) \in (0,1]$. 
    \end{itemize}
\end{theorem}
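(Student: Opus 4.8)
The plan is to treat \eqref{eq:SIS_dynamics} as what it is --- a scalar autonomous ODE $\dot i = f(i)$ with $f(i) = i(\beta - \alpha - \beta i)$ --- and exploit the fact that bounded trajectories of such equations are monotone and can only limit to equilibria. First I would record the two equilibria $i = 0$ and $i = 1 - \alpha/\beta$, and check that $[0,1]$ is forward invariant: $f(0) = 0$, so the constant solution $i \equiv 0$ is a barrier no trajectory can cross, and $f(1) = -\alpha < 0$, so the vector field points strictly inward at $i = 1$. Hence every trajectory starting in $[0,1]$ stays there for all $t \ge 0$, exists globally, and (being bounded and monotone on each side of the equilibria) converges; the limit, by continuity of $f$, must be a zero of $f$. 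The degenerate initial condition $i(0) = 0$ is handled trivially.

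For the first case, $\beta/\alpha \le 1$: for every $i \in (0,1]$ one has $\beta - \alpha - \beta i \le -\beta i < 0$, so $f(i) < 0$, meaning $i(t)$ is strictly decreasing and bounded below by $0$. Its limit is therefore an equilibrium in $[0, i(0))$, and since $1 - \alpha/\beta \le 0$ the only such equilibrium is $0$; thus $i(t) \to \iIFE$. Equivalently, $V(i) = i$ serves as a Lyapunov function on $[0,1]$ with $\dot V = f(i) \le 0$ and $\dot V = 0$ only at $i = 0$, and LaSalle's invariance principle gives the conclusion.

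For the second case, $\beta/\alpha > 1$, so $i^* = 1 - \alpha/\beta \in (0,1)$ and $f(i) = \beta\, i\,(i^* - i)$, which is positive on $(0,i^*)$ and negative on $(i^*,1]$. If $i(0) = i^*$ the trajectory is constant. If $i(0) \in (0,i^*)$, then $i(t)$ is strictly increasing, and since $(0,i^*)$ is forward invariant it is bounded above by $i^*$, so it converges to an equilibrium in $(0,i^*]$, which can only be $i^*$. If $i(0) \in (i^*,1]$, then $i(t)$ is strictly decreasing and bounded below by $i^*$, so it converges to the unique equilibrium that is $\ge i^*$, again $i^*$. In all cases $i(t) \to i^* = 1 - \alpha/\beta$. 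As an alternative route, \eqref{eq:SIS_dynamics} is a Bernoulli equation: while $i > 0$ the substitution $u = 1/i$ linearizes it to $\dot u = -(\beta - \alpha)u + \beta$, whose explicit solution, taken as $t \to \infty$, reproduces both limits and moreover yields an exponential convergence rate.

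I do not anticipate a genuine obstacle here --- the statement is the classical phase-line analysis of a logistic-type scalar equation and is cited rather than claimed as new. The only points that deserve a sentence of care are (i) forward invariance of $[0,1]$, which guarantees global existence so that the asymptotic argument is even meaningful, and (ii) the elementary but essential fact that a monotone bounded solution of an autonomous scalar ODE must converge to an equilibrium, ruling out the ``wrong'' equilibrium by the sign and monotonicity information already collected.
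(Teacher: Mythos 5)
Your proof is correct. The paper itself offers no proof of this statement---it is quoted from the cited reference as a well-known fact---so there is nothing to compare against; your phase-line argument (forward invariance of $[0,1]$, sign analysis of $f(i)=\beta i(1-\alpha/\beta - i)\cdot\frac{\beta}{\beta}$ written as $\beta i(i^*-i)$, and monotone convergence of bounded scalar trajectories to equilibria) is the standard and complete way to establish it, and the Bernoulli substitution $u=1/i$ is a valid alternative that additionally gives the convergence rate.
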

From the above result, the dynamics of the SIS model follow a threshold on the parameter $\frac{\beta}{\alpha}$, which is often referred to as the basic reproduction number.

\subsection{A-SIS: Active cyber defense dynamics}

We now formulate our dynamical model of active cyber defense. Like in the SIS model, every node comes equipped with recovery software (and thus recover from infection at rate $\heal$). Now, suppose a fixed fraction $x_a \in [0,1]$ of the nodes are \emph{active defenders}. In addition to reactive defenses, they are able to employ active defenses against the spreading malware. Upon contact with any infected node, an active susceptible node is able to clean up the malware at the infected node with rate $\abeta>0$. We assume that an active defender can only clean up other infected nodes when it is itself not infected, which is natural. Thus, a susceptible state is required to implement active defenses. The remaining fraction $1-x_a$ of the nodes are not active defenders. We refer to these nodes as \emph{reactive} nodes.

\begin{figure}
    \centering
    \includegraphics[scale=0.3]{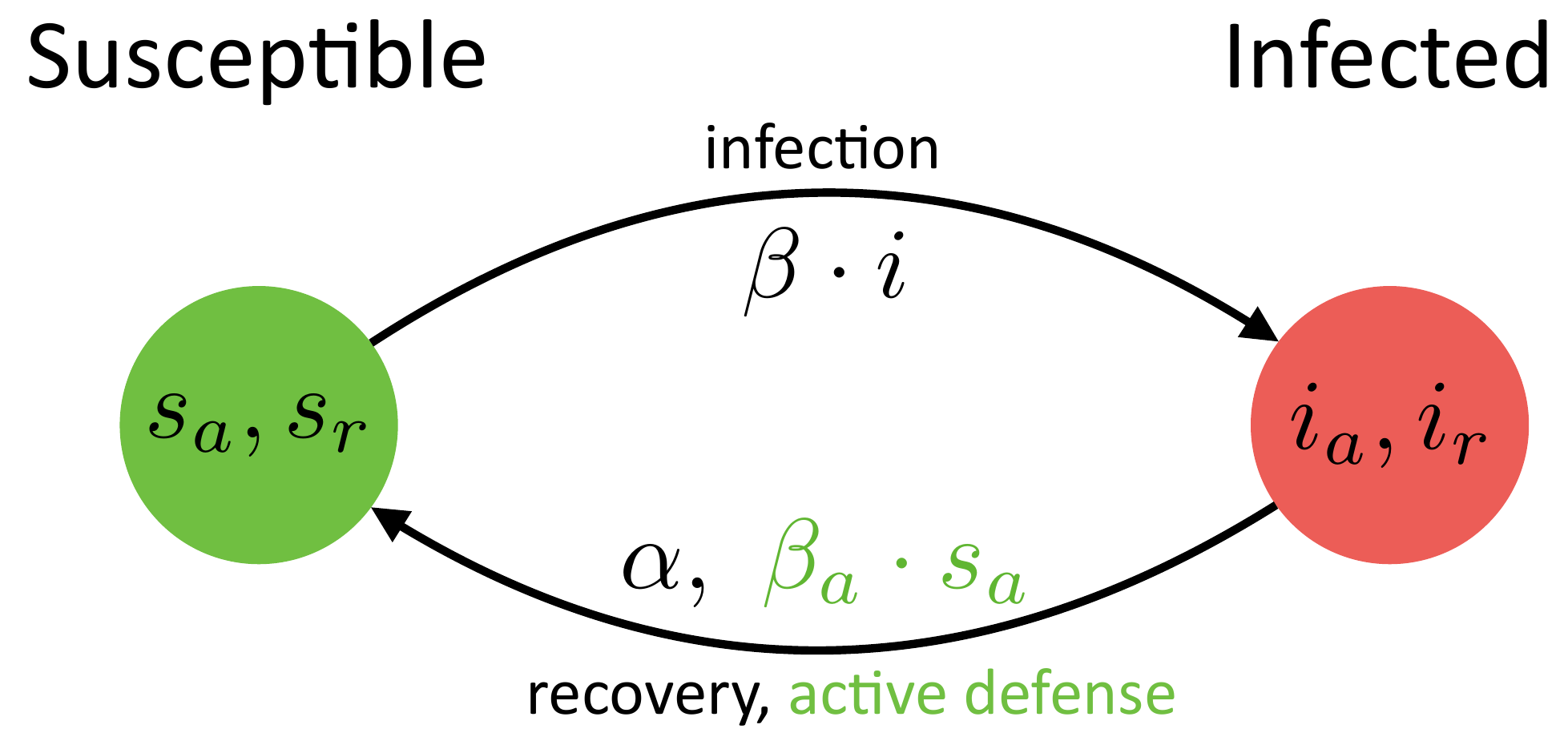}
    \caption{State transition diagram under active cyber defense dynamics (A-SIS). Each node is one of two types that describe its equipped defense technology. Active nodes (subscript $a$) have both active and reactive defenses, whereas reactive nodes (subscript $r$) have only recovery capability. An infected node can transition to susceptible either through traditional recovery, or by interactions with active defenders that are susceptible.}
    \label{fig:main_figure}
\end{figure}

Let $\s_a(t)$, $\s_r(t)$, $\i_a(t)$, and $\i_r(t)$ be the fraction of susceptible active, susceptible reactive, infected active, and infected reactive nodes, respectively. We denote $\i(t) \triangleq \i_a(t) + \i_r(t)$ as the total infected fraction at time $t$. The dynamics are given by
\begin{equation}\label{eq:SIS_active_full_dynamics}
    \begin{aligned}
        \frac{d\s_a}{dt} &= -\mbeta \s_a \i + \abeta \s_a \i_a + \heal \i_a \\
        \frac{d\i_a}{dt} &= \mbeta \s_a \i - \abeta \s_a \i_a - \heal \i_a \\
        \frac{d\s_r}{dt} &= -\mbeta \s_r \i + \abeta \s_a \i_r + \heal \i_r \\
        \frac{d\i_r}{dt} &= \mbeta \s_r \i - \abeta \s_a \i_r - \heal \i_r \\
    \end{aligned}
\end{equation}
A state transition diagram is shown in Figure \ref{fig:main_figure}. From the above equations, $\frac{d\s_a}{dt} + \frac{d\i_a}{dt} = 0$ and $\frac{d\s_r}{dt} + \frac{d\i_r}{dt} = 0$, and therefore we have $\s_a(t) + \i_a(t) = x_a$ and $\s_r(t) + \i_r(t) = 1-x_a$ for any time $t\geq 0$. The dynamics in \eqref{eq:SIS_active_full_dynamics} is thus a planar system with state $\bs{i} \triangleq (\i_a,\i_r) \in [0,x_a]\times[0,1-x_a]$ governed by the dynamics
\begin{equation}\label{eq:SIS_active_dynamics}
    \begin{aligned}
        \frac{d\i_a}{dt} &= F_a(\bs{i}) \triangleq \mbeta (x_a - \i_a) \i - \abeta (x_a-\i_a) \i_a - \heal \i_a \\
        \frac{d\i_r}{dt} &= F_r(\bs{i}) \triangleq \mbeta (1-x_a-\i_r) \i - \abeta (x_a-\i_a) \i_r - \heal \i_r \\
    \end{aligned}\tag{A-SIS}
\end{equation}
We will denote the state space as $\Gamma \triangleq [0,x_a]\times[0,1-x_a]$. The initial condition is specified by $\bs{i}_0 = (\i_a(0),\i_r(0)) = (\i_{a0}, \i_{r0}) \in \Gamma$. We seek to characterize the equilibria of system \eqref{eq:SIS_active_dynamics} and their stability properties. We immediately see that the infection-free equilibrium $\bs{i}_\text{IFE} \triangleq (0,0)$ is an equilibrium point of system \eqref{eq:SIS_active_dynamics}. We say an equilibrium $\bs{i}^*$ is \emph{interior} if $\bs{i}^* \in (0,x_a)\times(0,1-x_a)$. We will study the stability properties of A-SIS in the next section.

\section{Stability analysis of A-SIS dynamics}\label{sec:stability}

We consider the following stability notions. Let $\mce = \{\bs{i} \in \Gamma : F(\bs{i}) = (0,0)\}$ be the set of equilibrium points.
\begin{definition}
    An equilibrium point $\bs{i}^* \in \Gamma$ is \emph{globally asymptotically stable (GAS) with respect to} $\Gamma$ if for all $\bs{i}_0 \in \Gamma\backslash\mce$, $\bs{i}(t)$ converges to $\bs{i}^*$.
\end{definition}
Since we are considering the particular system \eqref{eq:SIS_active_dynamics}, we will simply say an equilibrium point is globally asymptotically stable (GAS). The stability properties of $\IFE$ is summarized in our main result below.
\begin{theorem}\label{thm:IFE_GAS}
    Consider system \eqref{eq:SIS_active_dynamics}. 
    \begin{enumerate}
        \item The equilibrium $\IFE$ is globally asymptotically stable  if and only if $\frac{\mbeta}{\heal} \leq 1 + \frac{\abeta x_a}{\heal}$.
        \item When $\frac{\mbeta}{\heal} > 1 + \frac{\abeta x_a}{\heal}$, there exists a unique interior equilibrium $\bs{i}^* = (x_a\frac{\lambda_+}{\lambda_+ + \heal}, (1-x_a)\frac{\lambda_+}{\lambda_+ + \heal})$ that is globally asymptotically stable, where
        \begin{equation}
            \lambda_+ \triangleq \mbeta - \abeta x_a - \heal.
        \end{equation}
        Here, $\bs{i}^*$ is referred to as the endemic equilibrium.
    \end{enumerate} 
\end{theorem}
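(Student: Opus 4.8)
\emph{Proof sketch.} The plan is to collapse the planar system \eqref{eq:SIS_active_dynamics} onto a one-dimensional invariant curve on which it becomes a logistic equation, by exhibiting a fast, globally decaying ``imbalance'' coordinate. First I would confirm that $\Gamma$ is forward invariant by inspecting $F=(F_a,F_r)$ on its four faces: $F_a\ge 0$ on $\{\i_a=0\}$, $F_a=-\heal x_a\le 0$ on $\{\i_a=x_a\}$, $F_r\ge 0$ on $\{\i_r=0\}$, and $F_r\le 0$ on $\{\i_r=1-x_a\}$; hence every solution is global and stays in the compact set $\Gamma$.

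The key step is to introduce $w\triangleq(1-x_a)\i_a-x_a\i_r=\i_a-x_a\i$. A short computation gives $\dot w=(1-x_a)F_a-x_aF_r=-\bigl(\mbeta\i+\abeta(x_a-\i_a)+\heal\bigr)w$, and since the coefficient is $\ge\heal>0$ on $\Gamma$ we obtain $|w(t)|\le|w(0)|e^{-\heal t}\to 0$; so every trajectory approaches the invariant segment $M\triangleq\Gamma\cap\{w=0\}=\{(x_a\rho,(1-x_a)\rho):\rho\in[0,1]\}$, and since $\dot w=0$ at any equilibrium while its coefficient is strictly positive, the entire equilibrium set $\mce$ lies on $M$. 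On $M$, substituting $\i_a=x_a\rho$, $\i_r=(1-x_a)\rho$ gives $F_a=x_a\,g(\rho)$ and $F_r=(1-x_a)\,g(\rho)$ with $g(\rho)\triangleq\rho\bigl(\lambda_+-(\lambda_++\heal)\rho\bigr)$ and $\lambda_++\heal=\mbeta-\abeta x_a$, so $M$ carries the scalar logistic flow $\dot\rho=g(\rho)$ (which for $x_a=0$ is exactly \eqref{eq:SIS_dynamics}). The affine factor $\lambda_+-(\lambda_++\heal)\rho$ equals $\lambda_+$ at $\rho=0$ and $-\heal<0$ at $\rho=1$, so if $\lambda_+\le 0$ it is negative on all of $(0,1]$, forcing $\mce=\{\IFE\}$, whereas if $\lambda_+>0$ it changes sign exactly once, at $\rho^*\triangleq\lambda_+/(\lambda_++\heal)\in(0,1)$, forcing $\mce=\{\IFE,\bs{i}^*\}$ with $\bs{i}^*=(x_a\rho^*,(1-x_a)\rho^*)$; this already establishes existence, interiority and uniqueness of the endemic equilibrium in part~2.

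To finish I would upgrade ``$w(t)\to 0$'' together with ``$\dot\rho=g(\rho)$ on $M$'' into convergence of the full planar trajectory. In $(\i,w)$ coordinates the total infection satisfies $\dot\i=g(\i)+\abeta\,\i\,w$, a logistic equation perturbed by an exponentially vanishing term, and I would control it with the Lyapunov function $V(\bs i)=\Phi(\i)+K|w|$, $K>\abeta/\heal$, where $\Phi(\i)=\i$ if $\lambda_+\le 0$ and $\Phi(\i)=\i-\rho^*-\rho^*\ln(\i/\rho^*)$ if $\lambda_+>0$: using $\tfrac{d}{dt}|w|\le-\heal|w|$ along any trajectory not lying in $M$, together with $\Phi'(\i)g(\i)=g(\i)\le 0$ in the first case and $\Phi'(\i)g(\i)=-(\lambda_++\heal)(\i-\rho^*)^2$ in the second, one checks $\dot V<0$ on $\Gamma\setminus\mce$ with $\dot V=0$ only at $\IFE$, resp.\ $\bs{i}^*$, while $V$ is positive definite about that point and---in the endemic case---blows up as $\i\to 0^+$, so the infection cannot die out. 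LaSalle's invariance principle then gives global convergence to $\IFE$ when $\lambda_+\le 0$ (the ``if'' direction of part~1, including the non-hyperbolic threshold $\lambda_+=0$) and to $\bs{i}^*$ when $\lambda_+>0$ and $\bs i_0\neq\IFE$ (part~2). For the converse in part~1, when $\lambda_+>0$ the Jacobian of $F$ at $\IFE$ equals a rank-one matrix minus $(\abeta x_a+\heal)I$ and hence has eigenvalues $\lambda_+>0$ and $-(\abeta x_a+\heal)<0$, so $\IFE$ is unstable and not GAS; equivalently, every trajectory from $\Gamma\setminus\mce$ converges to $\bs{i}^*\neq\IFE$.

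I expect the last step to be the main obstacle. Spotting the coordinate $w$ and deriving its linear, strictly contracting equation is the crucial insight, after which the equilibrium bookkeeping is routine; the delicate part is turning the decay of $w$ plus the scalar behaviour on $M$ into honest two-dimensional asymptotics, precisely because at $\lambda_+=0$ the linearization at $\IFE$ is inconclusive (a zero eigenvalue), so a Lyapunov/LaSalle argument---or, equivalently, an $\omega$-limit-set analysis on $M$ together with a uniform-persistence estimate for $\i$---is what actually closes the gap.
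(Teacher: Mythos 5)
Your proposal is correct, and it takes a genuinely different route from the paper. The paper classifies equilibria via the nullcline graphs $I_a$ and $\hat{I}_r$ (proving convexity/concavity and an explicit inverse via the quadratic formula, Lemmas 3.1--3.3), and then proves GAS with max-separable Lyapunov functions: $\max\{\tfrac{1-x_a}{x_a}i_a,\, i_r\}$ for the IFE and the four-region function $V_R$ for the endemic state. Your key identity
\begin{equation*}
\dot w \;=\; -\bigl(\mbeta \i + \abeta(x_a-i_a)+\heal\bigr)\,w,
\qquad w \;=\; (1-x_a)i_a - x_a i_r,
\end{equation*}
which I have verified term by term, replaces all of that nullcline machinery: it forces every equilibrium onto the line $\{w=0\}$ (which is exactly the dashed line $i_r=\tfrac{1-x_a}{x_a}i_a$ around which the paper's Lyapunov functions are built, so your coordinate explains \emph{why} that line is the right one), and on that line the flow is the scalar logistic equation $\dot\rho=\rho(\lambda_+-(\lambda_++\heal)\rho)$, making existence and uniqueness of the interior equilibrium immediate. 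Your closing Lyapunov functions $\Phi(\i)+K|w|$ with $K>\abeta/\heal$ also check out: in the subcritical case $\dot V\le g(\i)-(K\heal-\abeta)|w|<0$ off the IFE, and in the supercritical case the Volterra term gives $\dot V\le -(\lambda_++\heal)(\i-\rho^*)^2-(K\heal-\abeta)|w|$, which is a \emph{strict} Lyapunov function, so you do not even need LaSalle. The only technicality worth stating explicitly is the nonsmoothness of $|w|$ at $w=0$; since $\{w=0\}$ is invariant, along any trajectory either $w\equiv 0$ or $w$ never vanishes, so the Dini-derivative estimate is legitimate. What each approach buys: yours is shorter, handles the non-hyperbolic threshold $\lambda_+=0$ and the endemic case uniformly, and exposes the one-dimensional structure of the dynamics; the paper's max-separable construction does not rely on the existence of an exactly invariant attracting line and connects to the contraction-theoretic framework of \cite{coogan2019contractive}, which is more likely to survive generalizations (e.g., heterogeneous contact rates) where your $w$-equation would no longer close.
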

Note that Theorem \ref{thm:IFE_GAS} degenerates to Theorem \ref{thm:SIS} by setting $x_a=0$, meaning that all nodes only have reactive defenses. The presence of active defenders increases the threshold for which the IFE is GAS in comparison to the condition in Theorem \ref{thm:SIS}.

\subsection{Global asymptotic stability of the IFE}

Before establishing the GES result of $\IFE$, we first establish GAS of $\IFE$ in this subsection. The Jacobian is
\begin{equation}
    J(\bs{i}) =
    \begin{bmatrix}
        \frac{\partial F_a}{\partial\i_a} & \frac{\partial F_a}{\partial\i_r} \\
        \frac{\partial F_r}{\partial\i_a} & \frac{\partial F_r}{\partial\i_r}
    \end{bmatrix}
\end{equation}
where
\begin{equation}
    \begin{aligned}
        \frac{\partial F_a}{\partial\i_a} &= (\mbeta-\abeta)(x_a-2\i_a)-\mbeta \i_r - \heal \\
        \frac{\partial F_a}{\partial\i_r} &= \mbeta(x_a-\i_a) \\
        \frac{\partial F_r}{\partial\i_a} &= \abeta \i_r + \mbeta(1-x_a-\i_r) \\
        \frac{\partial F_r}{\partial\i_r} &= \mbeta(1-x_a-2\i_r-\i_a)-\abeta(x_a-\i_a)-\heal
    \end{aligned}
\end{equation}
Evaluated at $\IFE$, we have
\begin{equation}
    J(\IFE) =
    \begin{bmatrix}
        (\mbeta-\abeta)x_a - \heal & \mbeta x_a \\
        \mbeta(1-x_a) & \mbeta(1-x_a)-\abeta x_a - \heal
    \end{bmatrix}
\end{equation}
The characteristic equation is
\begin{equation}
    \begin{aligned}
        &(\lambda - ((\mbeta-\abeta)x_a - \heal))(\lambda - (\mbeta(1-x_a)-\abeta x_a - \heal)) \\
        &-\mbeta^2 x_a(1-x_a) = 0
    \end{aligned}
\end{equation}
and the eigenvalues are given by
\begin{equation}
    \lambda_+ \triangleq \mbeta - \abeta x_a - \heal, \quad \lambda_- \triangleq -(\abeta x_a + \heal).
\end{equation}
Since $\lambda_- < 0$, it is required that $\lambda_+ < 0$ for the IFE to be locally stable. This is the case under the condition
\begin{equation}
    x_a > \frac{\mbeta - \heal}{\abeta}.
\end{equation}
If $x_a < \frac{\mbeta - \heal}{\abeta}$, then the IFE is unstable.

Now, let us consider the two nullclines of system \eqref{eq:SIS_active_dynamics},
\begin{equation}
    \begin{aligned}
        \mcal{I}_a &\triangleq \left\{ \bs{i} \in \Gamma : F_a(\bs{i}) = 0 \right\} \\
        &= \left\{ \bs{i} \in \Gamma : \left(\frac{\heal}{\mbeta(x_a-\i_a)} - \left(1 - \frac{\abeta}{\mbeta} \right) \right) \i_a = \i_r \right\} \\
        \mcal{I}_r &\triangleq \left\{ \bs{i} \in \Gamma : F_r(\bs{i}) = 0 \right\} \\
        &= \left\{ \bs{i} \in \Gamma : \i_a = \left( \frac{\heal - \mbeta(1-x_a-\i_r) + \abeta x_a}{\mbeta(1-x_a-\i_r)+\abeta \i_r} \right) \i_r \right\}. \\
    \end{aligned}
\end{equation}
The intersection of the nullclines yields the set of equilibrium points. Observe that $\bs{i}_{\text{IFE}}$ is always an equilibrium. We may define functions
\begin{equation}
    \begin{aligned}
        I_a(\i_a) &\triangleq \left(\frac{\heal}{\mbeta(x_a-\i_a)} - \left(1 - \frac{\abeta}{\mbeta} \right) \right) \i_a \\
        I_r(\i_r) &\triangleq \left( \frac{\heal - \mbeta(1-x_a-\i_r) + \abeta x_a}{\mbeta(1-x_a-\i_r)+\abeta \i_r} \right) \i_r
    \end{aligned}
\end{equation}
whose graphs $(i_a,I_a(i_a))$ and $(I_r(i_r),i_r)$ are the $a$- and $r$-nullclines, respectively. The following convexity properties hold.

\begin{lemma}\label{lem:isocline_convexity}
Function $I_a(i_a)$ for $i_a \in [0,x_a)$ is convex and strictly increasing in $i_a$. Function $I_r(i_r)$ for $i_r \in [0,1-x_a)$ is convex in $i_r$. It is strictly increasing on $i_r \in [0,1-x_a)$ if $x_a \geq \frac{\mbeta-\heal}{\mbeta+\abeta}$, and on $i_r \in [\frac{\mbeta(1-x_a)-\abeta x_a-\heal}{\mbeta},1-x_a)$ if $x_a < \frac{\mbeta-\heal}{\mbeta+\abeta}$.
\end{lemma}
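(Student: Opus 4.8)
The plan is to treat $I_a$ and $I_r$ as explicit single--variable rational functions on bounded intervals and to establish each claim by inspecting first and (where needed) second derivatives, exploiting that each function is, up to an affine term, a constant multiple of the reciprocal of an affine map.

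For $I_a$, I would first rewrite it as
\[
I_a(i_a)=\frac{\heal}{\mbeta}\!\left(\frac{x_a}{x_a-i_a}-1\right)-\left(1-\frac{\abeta}{\mbeta}\right)i_a ,
\]
which displays $I_a$ as the affine map $-(1-\abeta/\mbeta)i_a-\heal/\mbeta$ plus $\tfrac{\heal}{\mbeta}\cdot\tfrac{x_a}{x_a-i_a}$. The second summand has derivative $\tfrac{\heal x_a}{\mbeta(x_a-i_a)^2}>0$ on $[0,x_a)$, and this derivative is itself strictly increasing there; hence that summand is convex and increasing, and adding the affine part preserves convexity, giving convexity of $I_a$. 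For monotonicity I would use that $I_a'(i_a)=\tfrac{\heal x_a}{\mbeta(x_a-i_a)^2}-(1-\abeta/\mbeta)$ is strictly increasing in $i_a$, so it suffices to verify $I_a'\ge 0$ at the left end of the portion of the $a$--nullcline contained in $\Gamma$ (equivalently, on $\{i_a:I_a(i_a)\ge 0\}$), which reduces to one inequality among $\heal,\mbeta,\abeta,x_a$.

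For $I_r$ I would write $I_r=N/D$ with $N(i_r)=\mbeta i_r^2+Ai_r$, where $A\triangleq\heal+\abeta x_a-\mbeta(1-x_a)$, and $D(i_r)=\mbeta(1-x_a)-(\mbeta-\abeta)i_r$; from $D(0)=\mbeta(1-x_a)>0$, $D(1-x_a)=\abeta(1-x_a)>0$ and affineness of $D$, one has $D>0$ on $[0,1-x_a]$. Polynomial division yields $I_r(i_r)=\bigl(\text{affine in }i_r\bigr)+R/D(i_r)$, with $R$ equal to $N$ evaluated at the (exterior) root of $D$; since $D$ is affine, positive and decreasing, $1/D$ is convex, so convexity of $I_r$ follows once the sign of $R$ is determined from the definition of $A$. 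For monotonicity, $I_r'$ has the same sign as the quadratic $q(i_r)\triangleq N'(i_r)D(i_r)+(\mbeta-\abeta)N(i_r)$, and I would show: (i) $q$ is strictly increasing on $[0,1-x_a]$ (a one--line check, its derivative being positive there); and (ii) $q(0)$ has the sign of $A$, with $A\ge 0$ equivalent to $x_a\ge\frac{\mbeta-\heal}{\mbeta+\abeta}$. If $A\ge 0$ then (i)--(ii) force $q>0$ on $(0,1-x_a)$, i.e.\ $I_r$ is strictly increasing on $[0,1-x_a)$. If $A<0$, then $q$ has a unique zero in $(0,1-x_a)$, and evaluating $q$ at $-A/\mbeta=\frac{\mbeta(1-x_a)-\abeta x_a-\heal}{\mbeta}$ (the nonzero root of $N$) shows $q$ is already positive there, so $q>0$, hence $I_r'>0$, on $[\frac{\mbeta(1-x_a)-\abeta x_a-\heal}{\mbeta},1-x_a)$.

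Convexity of $I_a$, and the reciprocal--of--affine structure underlying convexity of $I_r$, are routine. I expect the main obstacle to be the sign bookkeeping in the $I_r$ step: pinning down the zero of $q$ relative to the threshold $\frac{\mbeta(1-x_a)-\abeta x_a-\heal}{\mbeta}$, and tracking how the sign of $\mbeta-\abeta$ governs the orientation of $q$, the side of $[0,1-x_a]$ on which the root of $D$ lies, and the sign of $R$. That is where the bulk of the computation will go.
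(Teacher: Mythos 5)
Your overall strategy---writing each nullcline function as an affine part plus a constant over a positive affine denominator, reading convexity off the sign of the residue, and controlling monotonicity of $I_r$ through the numerator $q=N'D+(\beta-\beta_a)N$---is the natural one (the paper states this lemma without proof, so there is nothing to compare it against), and the monotonicity half of your plan does go through: with $c=\beta(1-x_a)$ and $m=\beta-\beta_a$ one computes $q(i_r)=-\beta m\, i_r^2+2\beta c\, i_r+Ac$, so $q'=2\beta D>0$ on $[0,1-x_a]$, $q(0)=Ac$ has the sign of $A$, and when $A<0$ one checks $q(-A/\beta)=\tfrac{-A}{\beta}(Am+\beta c)>\tfrac{-A}{\beta}\alpha\beta>0$, which yields exactly the two regimes in the statement. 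Your treatment of $I_a$ is also essentially right, including your tacit correction of the statement: $I_a'(0)=\tfrac{\alpha}{\beta x_a}-1+\tfrac{\beta_a}{\beta}$ can be negative, so $I_a$ need not be increasing on all of $[0,x_a)$; but convexity plus $I_a(0)=0$ gives $I_a'(i_a)>I_a(i_a)/i_a\ge 0$ on the portion of the nullcline lying in $\Gamma$, which is all that is used downstream.

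The genuine gap is the step you defer: ``convexity of $I_r$ follows once the sign of $R$ is determined from the definition of $A$.'' It does not. The residue is $R=\tfrac{c}{m^2}(Am+\beta c)$, and $Am+\beta c=\alpha(\beta-\beta_a)+\beta_a(\beta-\beta_a x_a)$, whose sign is not governed by $A$ and is not always nonnegative. For instance, with $\beta=0.1$, $\beta_a=1$, $\alpha=0.1$, $x_a=0.5$ one gets $Am+\beta c=-0.49$, and indeed $I_r(0)=0$, $I_r(0.1)=0.4$, $I_r(0.2)\approx 0.496$ has strictly negative second differences: $I_r$ is strictly \emph{concave} there, even on the part of the nullcline inside $\Gamma$. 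So the convexity claim for $I_r$---and hence the lemma as stated---fails when $\beta_a$ is large relative to $\beta$; it holds if and only if $\beta(\alpha+\beta_a)\ge \beta_a(\alpha+\beta_a x_a)$, for which $\beta_a\le\beta$ is a clean sufficient condition. Executing your plan honestly would surface this, so you need either to add such a hypothesis or to rework the uniqueness argument of Lemma \ref{lem:no_intersection} (which relies on concavity of $\hat I_r$, i.e.\ convexity of $I_r$) so that it does not depend on this property.
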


Note that $I_a(i_a)$ ($i_a \in [0,x_a]$) and $I_r(i_r)$ ($i_r \in [0,1-x_a]$) are functions defined on different domains. It will be more convenient to have them on the same domain, $i_a \in [0,x_a]$. Such a representation for the $r$-isocline is given below.

\begin{lemma}\label{lem:ia_representation}
    The $r$-nullcline can explicitly be represented as a function of $i_a\in[0,x_a]$ with
    \begin{equation}\label{eq:ia_representation}
        \begin{aligned}
            &\hat{I}_r(i_a) 
            \triangleq\\
            &\frac{1}{2}\left[-\left(\frac{d + i_a(\mbeta-\abeta)}{\mbeta}\right) + \right. \\
            &\quad\quad + \left. \sqrt{\left(\frac{d + i_a(\mbeta-\abeta)}{\mbeta} \right)^2 + 4i_a(1-x_a)} \right]
        \end{aligned}
    \end{equation}
    where $d \triangleq \heal + \abeta x_a - \mbeta(1-x_a)$.
\end{lemma}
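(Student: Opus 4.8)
The plan is to reduce the defining relation of the $r$-nullcline to a quadratic equation in $\i_r$ whose coefficients are polynomials in $\i_a$, and then read off $\hat I_r$ as the appropriate branch of the quadratic formula. First I would substitute $\i = \i_a + \i_r$ into $F_r(\bs i) = 0$, expand the product $\mbeta(1 - x_a - \i_r)(\i_a + \i_r)$, and collect powers of $\i_r$. This produces an equation of the form $\mbeta\, \i_r^2 + B(\i_a)\,\i_r + C(\i_a) = 0$, and a short computation gives $B(\i_a) = \heal + \abeta x_a - \mbeta(1-x_a) + \i_a(\mbeta - \abeta) = d + \i_a(\mbeta-\abeta)$ and $C(\i_a) = -\mbeta(1-x_a)\i_a$, with $d$ exactly as in the statement. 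Applying the quadratic formula and pulling a factor $1/\mbeta$ inside the radical yields the two candidates $\tfrac12\big[-\tfrac{d + \i_a(\mbeta-\abeta)}{\mbeta} \pm \sqrt{(\tfrac{d+\i_a(\mbeta-\abeta)}{\mbeta})^2 + 4\i_a(1-x_a)}\big]$; the claimed formula is the one with the $+$ sign, so all that remains is to justify that choice and confirm the range.

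Next I would argue that the $+$ branch is the root lying in $\Gamma$. The key observation is that $C(\i_a) = -\mbeta(1-x_a)\i_a \le 0$, with equality only at $\i_a = 0$; hence by Vieta's formulas the two roots have opposite signs (or one of them vanishes). Since $\sqrt{(\cdot)^2 + 4\i_a(1-x_a)} \ge |\cdot| \ge \mp(\cdot)$, the $+$ root is the unique nonnegative one, while the $-$ root is $\le 0$. To confirm the $+$ root actually stays in $[0,1-x_a]$, I would evaluate $F_r$ at $\i_r = 1-x_a$: the term $\mbeta(1-x_a-\i_r)\i$ vanishes and one is left with $F_r(\i_a, 1-x_a) = -(1-x_a)\big[\abeta(x_a - \i_a) + \heal\big] \le 0$ for every $\i_a \in [0,x_a]$. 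Because $-F_r(\i_a,\cdot)$ is an upward-opening parabola in $\i_r$, it is negative precisely between its two roots, so $F_r(\i_a,\cdot) \ge 0$ on the interval from $0$ to its positive root and negative beyond it; the inequality $F_r(\i_a, 1-x_a) \le 0$ therefore forces the positive root to be $\le 1-x_a$. This shows $\hat I_r : [0,x_a] \to [0,1-x_a]$ is well defined and that its graph $(\i_a, \hat I_r(\i_a))$ traces exactly the branch of $\mcal I_r$ contained in $\Gamma$.

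The algebra (expansion, collection of terms, quadratic formula) is routine; the only genuine content is the branch selection, and the main point to be careful about is the degenerate case $\i_a = 0$, where $C(0)=0$ and $\mcal I_r \cap \Gamma$ can consist of two points — namely $\i_r = 0$ and, when $d < 0$, $\i_r = -d/\mbeta = \big(\mbeta(1-x_a) - \abeta x_a - \heal\big)/\mbeta$. There $\hat I_r(0)$ returns the larger of the two, which is consistent with the description in Lemma \ref{lem:isocline_convexity} of where $I_r$ begins to increase; I would add a one-line remark recording this so the representation is unambiguous.
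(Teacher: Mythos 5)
Your proposal is correct and follows the same route as the paper's (one-line) proof: reduce $F_r(\bs i)=0$ to a quadratic in $\i_r$ with coefficients $B(\i_a)=d+\i_a(\mbeta-\abeta)$ and $C(\i_a)=-\mbeta(1-x_a)\i_a$, and apply the quadratic formula. The extra material you supply — selecting the $+$ branch via the sign of the root product, checking $F_r(\i_a,1-x_a)\le 0$ to confirm the root lies in $[0,1-x_a]$, and flagging the degenerate point $\i_a=0$ where the nullcline meets the axis twice when $d<0$ — is all sound and fills in details the paper leaves implicit.
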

\begin{proof}
    To obtain this representation, we must solve the equation $I_r(i_r) = i_a$ for $i_r$. After rearranging terms, it becomes a quadratic equation in $i_r$. Applying the quadratic formula, we obtain \eqref{eq:ia_representation}.
\end{proof}
The representation $\hat{I}_r(i_a)$ is the inverse function of $I_r(i_r)$. Because $I_r(i_r)$ is convex and strictly increasing with respect to $i_r$, it follows that $\hat{I}_r(i_a)$ is concave and strictly increasing in $i_a \in [0,x_a]$. These properties allow us to establish conditions for the existence of a unique interior equilibrium.

\begin{lemma}\label{lem:no_intersection}
    If $\lambda_+ \leq 0$, then $\bs{i}_{\text{IFE}}$ is the only isolated equilibrium in $\Gamma$. A unique interior fixed point exists if and only if $\lambda_+ >0$. 
\end{lemma}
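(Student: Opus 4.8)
The plan is to reduce the count of equilibria of \eqref{eq:SIS_active_dynamics} to the count of zeros of the single scalar function $g(i_a) \triangleq \hat{I}_r(i_a) - I_a(i_a)$ on $[0,x_a)$, and then exploit concavity; throughout, take $x_a \in (0,1)$ (for $x_a \in \{0,1\}$ the system collapses to a scalar SIS-type equation handled directly). First I would clear the boundary of $\Gamma$: on $\{i_a=0\}$ one has $F_a = \mbeta x_a i_r$, on $\{i_r=0\}$ one has $F_r = \mbeta(1-x_a)i_a$, on $\{i_a=x_a\}$ one has $F_a = -\heal x_a < 0$, and on $\{i_r=1-x_a\}$ one has $F_r = -(1-x_a)(\abeta(x_a-i_a)+\heal) < 0$; hence the only equilibrium on $\partial\Gamma$ is $\IFE$, and every other equilibrium is interior. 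Since the $a$-nullcline is the graph $(i_a,I_a(i_a))$ and, by Lemma~\ref{lem:ia_representation}, the $r$-nullcline is the graph $(i_a,\hat{I}_r(i_a))$ (for $i_a>0$ the defining quadratic has roots of opposite sign, so no other branch enters $\Gamma$), interior equilibria are in bijection with the zeros $i_a^* \in (0,x_a)$ of $g$: at such a point $i_r^*=\hat{I}_r(i_a^*) > \hat{I}_r(0)\ge 0$, and $i_r^* < 1-x_a$ follows because the quadratic defining $\hat{I}_r$ evaluates to $(1-x_a)\tfrac{\heal+\abeta(x_a-i_a)}{\mbeta} > 0$ at $i_r=1-x_a$.

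Next I would record the shape of $g$. By Lemma~\ref{lem:isocline_convexity}, $I_a$ is convex on $[0,x_a)$ — in fact strictly convex, since $\tfrac{\heal}{\mbeta}\cdot\tfrac{i_a}{x_a-i_a}$ is strictly convex — while $\hat{I}_r$ is concave and increasing, so $g$ is strictly concave on $[0,x_a)$. As $i_a\to x_a^-$, $I_a(i_a)\to+\infty$ while $\hat{I}_r$ stays bounded, so $g(i_a)\to-\infty$. At the left endpoint $g(0)=\hat{I}_r(0)=\max\{0,\,-d/\mbeta\}$, and a one-line computation gives $d=\mbeta x_a-\lambda_+$, so the sign of $d$ is exactly the comparison of $\lambda_+$ with $\mbeta x_a$; in particular $\lambda_+\le 0$ forces $d>0$ and $g(0)=0$.

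The one genuinely delicate step is the slope $g'(0)$ when $d>0$. Differentiating $I_r(\hat{I}_r(i_a))=i_a$ and using $I_r'(0)=\tfrac{d}{\mbeta(1-x_a)}$ gives $\hat{I}_r'(0)=\tfrac{\mbeta(1-x_a)}{d}$; combining with $I_a'(0)=\tfrac{\heal+\abeta x_a-\mbeta x_a}{\mbeta x_a}$ and simplifying, the numerator collapses and one obtains
\begin{equation}
    g'(0) = \frac{(\heal+\abeta x_a)\,\lambda_+}{\mbeta x_a\, d},
\end{equation}
so that $g'(0)$ has the same sign as $\lambda_+$ (recall $d>0$ here). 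I expect verifying this collapse — the numerator of $g'(0)$ reducing to $(\heal+\abeta x_a)\lambda_+$ — to be the main obstacle; the rest is soft.

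It then remains to assemble the cases from concavity. If $\lambda_+\le 0$: $d>0$, $g(0)=0$, $g'(0)\le 0$, and strict concavity makes $g$ strictly decreasing, so $g<0$ on $(0,x_a)$ — no interior equilibrium, and with the boundary step $\IFE$ is the only equilibrium in $\Gamma$, hence trivially the only isolated one. If $0<\lambda_+<\mbeta x_a$: $d>0$, $g(0)=0$, $g'(0)>0$, and a strictly concave function starting at $0$ with positive slope and tending to $-\infty$ has a unique zero in $(0,x_a)$. If $\lambda_+=\mbeta x_a$: $d=0$ and $\hat{I}_r(i_a)\sim\sqrt{(1-x_a)i_a}$ near $0$ dominates the linear $I_a(i_a)$, so $g(0)=0$ but $g>0$ just to the right, again a unique zero. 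If $\lambda_+>\mbeta x_a$: $d<0$, so $g(0)=-d/\mbeta>0$, and concavity together with $g\to-\infty$ again gives a unique zero. Thus an interior equilibrium exists, and is unique, exactly when $\lambda_+>0$, which together with the first case is precisely the claim.
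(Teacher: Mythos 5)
Your proposal is correct, and every computation I checked holds: $d=\mbeta x_a-\lambda_+$, the product of roots of the quadratic defining $\hat I_r$ being $-i_a(1-x_a)<0$ so that only one branch lies in $\Gamma$, the boundary values of $F_a$ and $F_r$, and the collapse of the numerator of $g'(0)$ to $(\heal+\abeta x_a)\lambda_+$ (indeed $\mbeta^2x_a(1-x_a)-(\heal+\abeta x_a-\mbeta x_a)d=(\heal+\abeta x_a)\lambda_+$). The skeleton is the same as the paper's — intersect the graph of $I_a$ with that of $\hat I_r$ and exploit convexity of the former against concavity of the latter — but your execution differs in three substantive ways. First, for $\lambda_+\le 0$ the paper compares both slopes at the origin to the reference line $\tfrac{1-x_a}{x_a}i_a$ (and, as a typographical matter, states both displayed inequalities with the signs reversed relative to what the subsequent Lyapunov argument actually uses), whereas you compare the slopes to each other via the single quantity $g'(0)$, whose sign is that of $\lambda_+$; this is cleaner and simultaneously yields the instability direction. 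Second, for $\lambda_+>0$ the paper simply exhibits the algebraic solution $(x_af,(1-x_a)f)$ and asserts it is the unique nonzero root, while you prove existence by the intermediate value theorem and uniqueness from strict concavity of $g$, splitting on the sign of $d$ (the $d\le 0$ sub-cases, where $\hat I_r(0)>0$ and the origin is not even on the $\hat I_r$-branch, are genuinely needed and the paper glosses over them). Third, you explicitly rule out non-corner boundary equilibria, which the paper leaves implicit. The one thing your route does not deliver that the paper's does is the closed form of the endemic equilibrium, which Theorem~\ref{thm:IFE_GAS} part 2 needs later; appending the one-line verification that $(x_af,(1-x_a)f)$ satisfies $F_a=F_r=0$ (so that it must be the unique zero you constructed) would make your argument a drop-in replacement.
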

\begin{proof}
    We will consider two cases.
    
    When $\lambda_+ \leq 0$, we have that $I_a(0) = \hat{I}_r(0) = 0$, and can directly calculate
    \begin{equation}
        \frac{\partial I_a}{\partial i_a} = \left( \frac{\heal}{\mbeta(x_a-i_a)} - \left(1 - \frac{\abeta}{\mbeta} \right) \right) + i_a\frac{\heal}{\mbeta(x_a-i_a)^2}
    \end{equation}
    to obtain
    \begin{equation}
        \frac{\partial I_a}{\partial i_a}(0) = \frac{\heal}{\mbeta x_a} - \left(1 - \frac{\abeta}{\mbeta} \right) \leq \frac{1-x_a}{x_a}.
    \end{equation}
    Similarly, we calculate 
    \begin{equation}
        \begin{aligned}
        &\frac{\partial \hat{I}_r}{\partial i_a} = \\
        &\frac{1}{2}\left[\frac{\abeta}{\mbeta}-1 + \frac{2(\mbeta-\abeta)(d+i_a(\mbeta-\abeta))/\mbeta^2 + 4(1-x_a)}{2\sqrt{(d+i_a(\mbeta-\abeta))^2/\mbeta^2 + 4i_a(1-x_a) }} \right]
        \end{aligned}
    \end{equation}
    and consequently obtain
    \begin{equation}
        \frac{\partial \hat{I}_r}{\partial i_a}(0) = \frac{\mbeta(1-x_a)}{\heal+\abeta x_a-\mbeta(1-x_a)} \geq \frac{1-x_a}{x_a}. 
    \end{equation}
    Due to the convexity of $I_a$ and concavity of $\hat{I}_r$, the only point at which the nullclines can intersect is the IFE.

    When $\lambda_+ > 0$, the system possesses a unique interior equilibrium, which we can solve for algebraically. We seek the non-zero solution of the system of equations $[F_a(i_a,i_r), F_r(i_a,i_r)]^\top = [0, 0]^\top$. We obtain the unique solution
    \begin{equation}
        i_a^* = x_a\cdot\frac{\lambda_+}{\lambda_+ + \heal}, \quad i_r^* = (1-x_a)\cdot\frac{\lambda_+}{\lambda_+ + \heal}.
    \end{equation}
\end{proof}

\begin{figure}[t]
    \centering
    \includegraphics[scale=0.25]{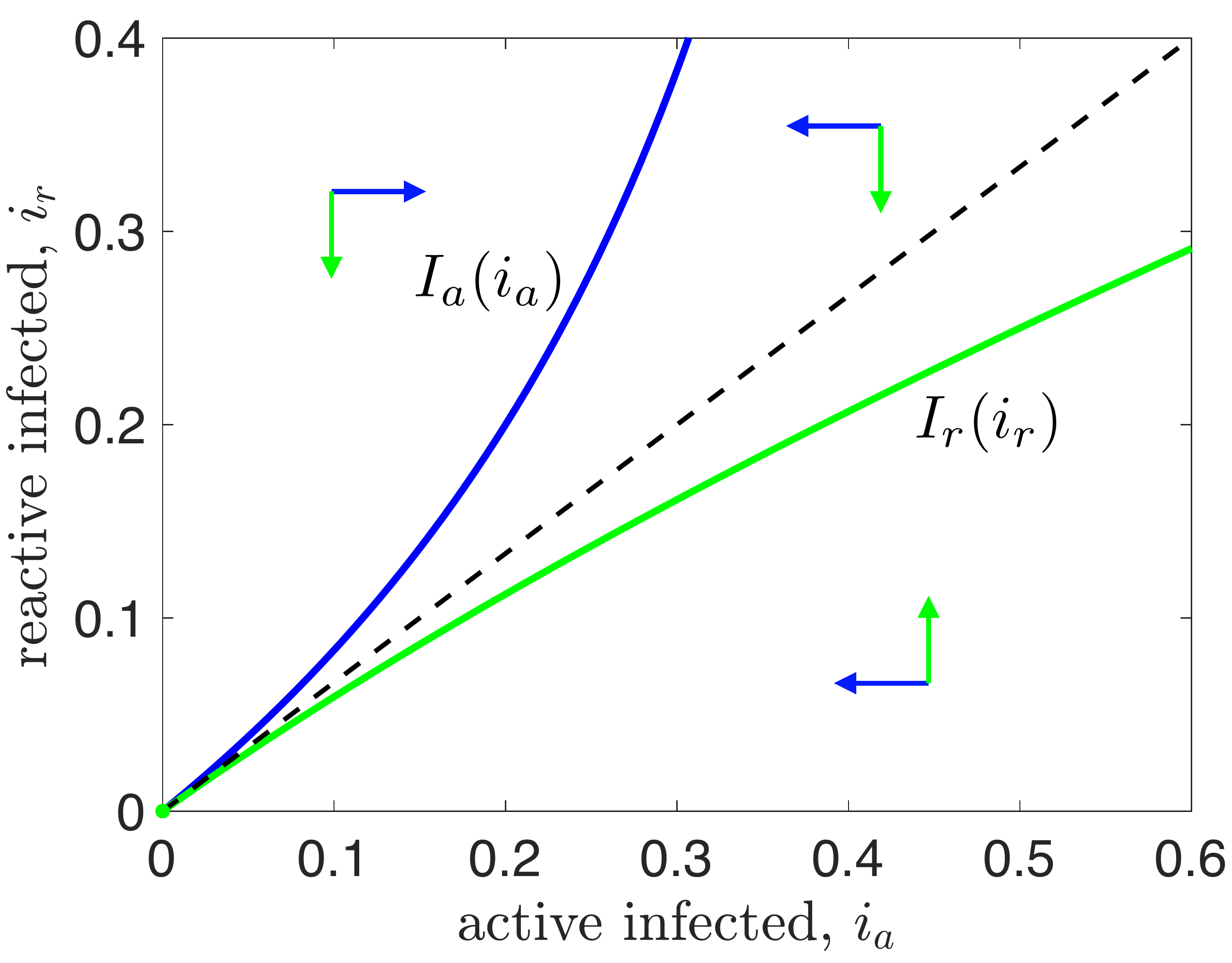}
    \caption{Nullclines of system \eqref{eq:SIS_active_dynamics}. The arrows depict the directions of the flow of system \eqref{eq:SIS_active_dynamics}. The dashed line is $i_r = \frac{1-x_a}{x_a}i_a$. Here, we have set $\mbeta = 0.3$, $\abeta = 0.35$, $\heal = 0.1$, and $x_a = 0.6$.}
    \label{fig:nullclines}
\end{figure}

Now, let us consider the candidate Lyapunov function $V:\Gamma \rightarrow \mathbb{R}_{\geq 0}$ defined by
\begin{equation}\label{eq:lyapunov_IFE}
    V(\bs{i}) \triangleq \max\left\{\frac{1-x_a}{x_a}i_a, i_r \right\}.
\end{equation}
It holds that $V(\bs{i}) \geq 0$ for all $\bs{i}$, with equality if and only if $\bs{i} = \IFE$. Observe that when $x_a > \frac{\mbeta-\heal}{\abeta}$ ($\lambda_+ < 0$), the line $\frac{1-x_a}{x_a}i_a$ is a lower bound for $I_a(i_a)$ and an upper bound for $\hat{I}_r(i_a)$ (see proof of Lemma \ref{lem:no_intersection}). Therefore, when $\lambda_+ < 0$, we have
\begin{equation}
    \frac{dV}{dt} = \mathds{1}_{\left\{i_r > \frac{1-x_a}{x_a}i_a\right\}} F_r(\bs{i}) + \mathds{1}_{\left\{i_r \leq \frac{1-x_a}{x_a}i_a\right\}} F_a(\bs{i}) \leq 0
\end{equation}
with equality if and only if $\bs{i} = \IFE$. Hence, $V(\bs{i})$ serves as a Lyapunov function that establishes the global asymptotic stability (w.r.t. $\Gamma$) of $\IFE$. This concludes the proof of Theorem \ref{thm:IFE_GAS} part 1.

A few remarks regarding the choice of $V$: the function $V(\bs{i})$ is of the form of a state max-separable Lyapunov function \cite{coogan2019contractive}. Therein, sufficient conditions on the Jacobian matrix on a forward-invariant convex set (here, $\Gamma$) are specified for which $V$ serves as a Lyapunov function of the equilibrium point. Specifically,
\begin{fact}[Corollary 5 from \cite{coogan2019contractive}]\label{fact:max_separable}
    Suppose there exists a vector $w > 0$ such that $J(\bs{i})w \leq 0$ for all $\bs{i} \in \Gamma$, and $J(\IFE)w < 0$. Then
    \begin{equation}
        \max\left\{\frac{1}{w_1}|i_a|, \frac{1}{w_2}|i_r|\right\}
    \end{equation}
    serves as a Lyapunov function and $\IFE$ is GAS w.r.t $\Gamma$.
\end{fact}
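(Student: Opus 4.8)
The plan is to recast the hypothesis as a one-sided contraction estimate for the weighted $\ell_\infty$ norm $V(\bs{i}) \triangleq \max\{ w_1^{-1}|i_a|,\, w_2^{-1}|i_r|\}$ and then close the argument with LaSalle's invariance principle. The first step is to write $F$ in averaged-Jacobian form: since $\Gamma$ is convex and contains the equilibrium $\IFE$, for any $\bs{i}\in\Gamma$ the segment $\{\tau\bs{i}:\tau\in[0,1]\}$ lies in $\Gamma$, so the fundamental theorem of calculus gives
\[
  F(\bs{i}) \;=\; F(\bs{i}) - F(\IFE) \;=\; M(\bs{i})\,\bs{i}, \qquad M(\bs{i}) \triangleq \int_0^1 J(\tau\bs{i})\,d\tau .
\]
Since $J(\tau\bs{i})w\le 0$ for every $\tau\in[0,1]$, we get $M(\bs{i})w\le 0$; and because $\tau\mapsto J(\tau\bs{i})w$ is continuous and equals $J(\IFE)w<0$ at $\tau=0$, each component of the integral is strictly negative, i.e.\ $M(\bs{i})w<0$ for \emph{every} $\bs{i}\in\Gamma$. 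Promoting the strict inequality from the single point $\IFE$ to all of $\Gamma$ in this way is the conceptual heart of the proof: it is what makes the Lyapunov decrease strict everywhere and hence yields asymptotic (not merely Lyapunov) stability.

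Second, I would show $\dot V<0$ along trajectories for every $\bs{i}\neq\IFE$. Here one uses that $J(\cdot)$, hence $M(\bs{i})$, is Metzler on $\Gamma$ — a standing hypothesis in \cite{coogan2019contractive}, and for system \eqref{eq:SIS_active_dynamics} visible directly from the partials computed above, since $\partial F_a/\partial i_r=\mbeta(x_a-i_a)\ge 0$ and $\partial F_r/\partial i_a=\abeta i_r+\mbeta(1-x_a-i_r)\ge 0$ throughout $\Gamma$. Fix $\bs{i}\neq\IFE$ and let $K$ be the set of indices attaining the maximum defining $V(\bs{i})$; a routine computation of the upper right Dini derivative of the (finite) max of the smooth functions $\pm w_k^{-1}i_k(t)$ gives $D^{+}V = \max_{k\in K} w_k^{-1}\,\mathrm{sign}(i_k)\,F_k(\bs{i})$. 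For $k\in K$, substitute $F_k(\bs{i})=\sum_j M_{kj}(\bs{i})\,i_j$, isolate the diagonal term, and bound using (i) $M_{kj}(\bs{i})\ge 0$ for $j\neq k$, (ii) $\mathrm{sign}(i_k)\,i_j\le |i_j|\le w_j V(\bs{i})$, and (iii) $|i_k|=w_k V(\bs{i})$; this yields $w_k^{-1}\mathrm{sign}(i_k)F_k(\bs{i}) \le w_k^{-1}\bigl(M(\bs{i})w\bigr)_k\,V(\bs{i})<0$. Hence $\dot V<0$ on $\Gamma\setminus\{\IFE\}$, while $\dot V(\IFE)=0$.

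Finally, $V$ is continuous and positive definite on $\Gamma$ with $V(\IFE)=0$, its sublevel sets intersected with $\Gamma$ are compact, $\Gamma$ is forward invariant, and $V$ is strictly decreasing along every trajectory with $\bs{i}_0\neq\IFE$; so by LaSalle's invariance principle every solution starting in $\Gamma$ converges to the largest invariant set inside $\{\bs{i}\in\Gamma:\dot V(\bs{i})=0\}=\{\IFE\}$, i.e.\ to $\IFE$, which is therefore GAS with respect to $\Gamma$. The step I expect to be most delicate is the Dini-derivative bookkeeping in the second paragraph — making the computation of $D^{+}V$ rigorous as the active index set $K$ varies along the trajectory; the cleanest way to handle it is to phrase the bound as $D^{+}V \le \mu_{\infty,w}\bigl(M(\bs{i})\bigr)\,V$, where $\mu_{\infty,w}$ is the matrix measure induced by $\|\cdot\|_{\infty,w}$, and to note that for a Metzler matrix $\mu_{\infty,w}(M)=\max_k w_k^{-1}(Mw)_k$, so that the hypothesis $M(\bs{i})w<0$ is exactly $\mu_{\infty,w}(M(\bs{i}))<0$. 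The remaining steps are standard.
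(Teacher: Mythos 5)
Your proof is correct, and it is worth noting that the paper does not actually prove this Fact at all --- it imports it wholesale as Corollary 5 of \cite{coogan2019contractive} and then immediately observes that its hypotheses fail for the A-SIS system (the second component of $J((x_a,0))w$ can be positive), falling back on a nullcline argument for the actual stability proof. So yours is a from-scratch derivation of the cited result, and it is the standard one: averaged-Jacobian decomposition $F(\bs{i})=M(\bs{i})\bs{i}$ with $M(\bs{i})=\int_0^1 J(\tau\bs{i})\,d\tau$ (valid here since $\Gamma$ is a box with $\IFE$ at its corner, so the segment $\tau\bs{i}$ stays in $\Gamma$), promotion of the strict inequality at $\IFE$ to all of $\Gamma$ by continuity of the integrand, and the Dini-derivative bound $D^+V\le \max_{k\in K} w_k^{-1}(M(\bs{i})w)_k\,V$, which is exactly the statement $\mu_{\infty,w}(M(\bs{i}))<0$. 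The one substantive observation you make that deserves emphasis: the Fact as transcribed in the paper silently drops the cooperativity (Metzler) hypothesis that is a standing assumption in \cite{coogan2019contractive}, and without it the step $\mathrm{sign}(i_k)M_{kj}(\bs{i})i_j\le M_{kj}(\bs{i})\,w_j V(\bs{i})$ for $j\neq k$ collapses; you correctly verify that $\partial F_a/\partial i_r=\mbeta(x_a-i_a)\ge 0$ and $\partial F_r/\partial i_a=\abeta i_r+\mbeta(1-x_a-i_r)\ge 0$ on $\Gamma$, so the Fact does apply to this system as stated. Two minor remarks: since you obtain $\dot V<0$ everywhere on $\Gamma\setminus\{\IFE\}$ and $\Gamma$ is compact and forward invariant, the classical Lyapunov asymptotic-stability theorem already closes the argument and LaSalle is not needed; and the forward invariance of $\Gamma$, which you assert, is part of the setup of the cited corollary but should strictly speaking be checked for \eqref{eq:SIS_active_dynamics} (it holds, since the vector field points inward on each face of the box).
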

The choice of max-separable Lyapunov function $V$ in \eqref{eq:lyapunov_IFE} is induced from vector $w = [\frac{x_a}{1-x_a},1]^\top$. However, the sufficient conditions of Fact \ref{fact:max_separable} are not necessarily met. In particular, the second entry of $J((x_a,0))w$ is given by $\mbeta - \mbeta x_a - \heal$, which can be positive even when $\lambda_+ < 0$.

\subsection{Global stability of the endemic equilibrium}

The system possesses a unique interior equilibrium when $x_a < \frac{\mbeta-\heal}{\abeta}$ ($\lambda_+ > 0$) (Lemma \ref{lem:no_intersection}), which is given by
\begin{equation}
    i_a^* = x_a\cdot\frac{\lambda_+}{\lambda_+ + \heal}, \quad i_r^* = (1-x_a)\cdot\frac{\lambda_+}{\lambda_+ + \heal}.
\end{equation}



We will refer to this interior equilibrium as the \emph{endemic} equilibrium, $\bs{i}^* = (i_a^*,i_r^*)$. To establish global asymptotic stability (part 2 of Theorem \ref{thm:IFE_GAS}), we will consider the candidate Lyapunov function $V_R : \Gamma \rightarrow \mbb{R}_{\geq 0}$,
\begin{equation}
    V_R(\bs{i}) \triangleq \max\left\{|i_a - x_a f|,R\cdot|i_r - (1-x_a)f| \right\}
\end{equation}
for some $R > 0$, where $f \triangleq \frac{\lambda_+}{\lambda_+ + \heal} \in (0,1)$. We can write it more explicitly as
\begin{equation}\label{eq:lyapunov_endemic}
    V_R(\bs{i}) =
    \begin{cases}
        x_a f - i_a, &\text{if } \bs{i} \in \Gamma_a^< \\
        i_a - x_a f, &\text{if } \bs{i} \in \Gamma_a^\geq \\
        R((1-x_a)f - i_r), &\text{if } \bs{i} \in \Gamma_r^< \\
        R(i_r - (1-x_a)f), &\text{if } \bs{i} \in \Gamma_r^\geq \\
    \end{cases}
\end{equation}
where the regions are defined as
\begin{equation}
    \begin{aligned}
        \Gamma_a^< &\triangleq \left\{ \bs{i}  : V_R(\bs{i}) = |i_a - x_a f|, \ i_a - x_a f < 0 \right\} \\
        \Gamma_a^\geq &\triangleq \left\{ \bs{i} : V_R(\bs{i}) = |i_a - x_a f|, \ i_a - x_a f \geq 0 \right\} \\
        \Gamma_r^< &\triangleq \left\{ \bs{i} : V_R(\bs{i}) = R|i_r-(1-x_a)f|, i_r - (1-x_a)f < 0 \right\} \\
        \Gamma_r^\geq &\triangleq \left\{ \bs{i} : V_R(\bs{i}) = R|i_r-(1-x_a)f|, i_r - (1-x_a)f \geq 0 \right\} \\
    \end{aligned}
\end{equation}
It holds that $V_R(\bs{i}) \geq 0$ for all $\bs{i} \in \Gamma$, and equality is satisfied if and only if $\bs{i} = \bs{i}^*$. An illustration of the regions is shown in Figure \ref{fig:endemic}.

\begin{figure}[!htbp]
    \centering
    \includegraphics[scale=0.25]{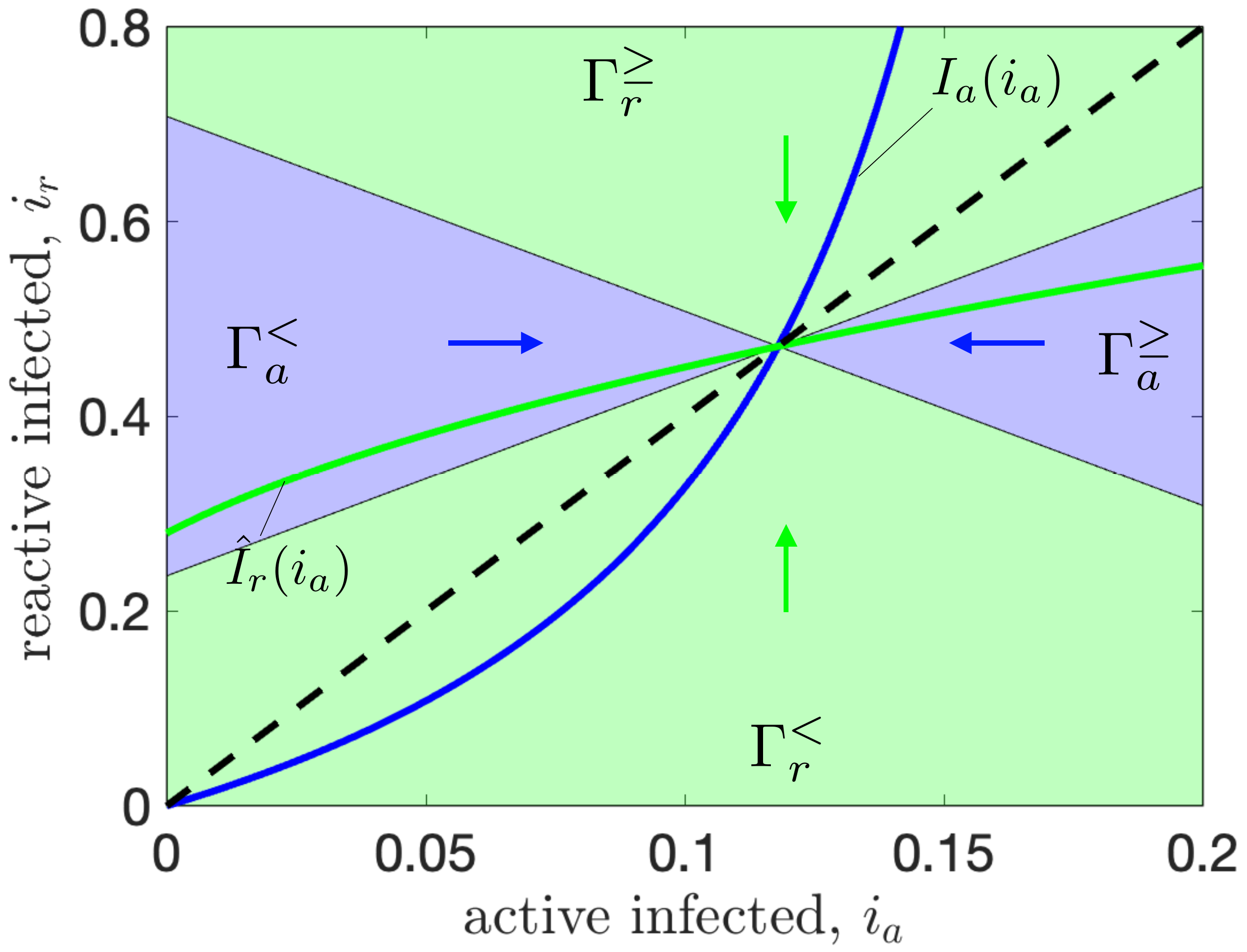}
    \caption{ The regions defined in the Lyapunov function $V_R$ \eqref{eq:lyapunov_endemic}. We set $\mbeta = 0.3$, $\abeta = 0.28$, $\heal = 0.1$, $x_a = 0.2$, which is in the regime $x_a \leq \frac{\mbeta-\heal}{\mbeta + \abeta}$. Here we set $R = 0.5$, which satisfies the condition of \eqref{eq:R_condition}, $0.25 \leq R < \min \{0.8746,0.6143\}$. } 
    \label{fig:endemic}
\end{figure}

In the case $x_a \leq \frac{\mbeta-\heal}{\mbeta+\abeta}$, we will select any $R$ in the range
\begin{equation}\label{eq:R_condition}
    \frac{x_a}{1-x_a} \leq R < \min\left\{\frac{1}{\hat{I}'_r(i_a^*)}, \frac{\mbeta x_a f}{d + \mbeta f (1-x_a)}\right\}.
\end{equation}
The condition $R < 1/\hat{I}'_r(i_a^*)$ ensures that the graph of $\hat{I}_r$, for $i_a \geq x_a f$, is contained in $\Gamma_a^\geq$. The condition $R < \frac{\mbeta x_a f}{d + \mbeta f (1-x_a)}$ ensures that the graph of $\hat{I}_r$, for $i_a < x_a f$, is contained in $\Gamma_a^<$. The condition $\frac{x_a}{1-x_a} \leq R$ ensures that the graph of $I_a$ in $\Gamma$ is contained only in either $\Gamma_r^<$ (for $i_a < x_a f$) or $\Gamma_r^\geq$ (for $i_a \geq x_a f$).
Under \eqref{eq:R_condition}, the following properties hold:
\begin{itemize}
    \item Any $\bs{i} \in \Gamma_a^<$ satisfies $F_a(\bs{i}) > 0$.
    \item Any $\bs{i} \in \Gamma_a^\geq$ satisfies $F_a(\bs{i}) \leq 0$ with equality if and only if $i_a = x_a f$.
    \item Any $\bs{i} \in \Gamma_r^<$ satisfies $F_r(\bs{i}) > 0$.
    \item Any $\bs{i} \in \Gamma_r^\geq$  $F_r(\bs{i}) \leq 0$ with equality if and only if $i_r = (1-x_a)f$.
\end{itemize}
We thus obtain
\begin{equation}
    \frac{dV_R}{dt}(\bs{i}) = 
    \begin{cases}
        -\frac{di_a}{dt}, &\text{if } \bs{i} \in \Gamma_a^< \\
        \frac{di_a}{dt}, &\text{if } \bs{i} \in \Gamma_a^\geq \\
        -R\frac{di_r}{dt}, &\text{if } \bs{i} \in \Gamma_r^< \\
        R\frac{di_r}{dt}, &\text{if } \bs{i} \in \Gamma_r^\geq \\
    \end{cases}
    \ \ \leq 0,
\end{equation}
with equality if and only if $\bs{i} = \bs{i}^*$. Similar arguments can be applied in the case $x_a > \frac{\mbeta-\heal}{\mbeta + \abeta}$. In this case, the arguments hold for the choice $R = 1$. This establishes global asymptotic stability (w.r.t. $\Gamma$) of $\bs{i}^*$ , which concludes the proof of Theorem \ref{thm:IFE_GAS} part 2.

\section{Optimal investments in active defense}\label{sec:investment}

We have now fully characterized the global asymptotic behavior of system \ref{eq:SIS_active_dynamics}. It may be summarized by the limiting infected fraction
\begin{equation}
    L(x_a,\abeta) \triangleq \lim_{t \rightarrow \infty} i(t) =
    \begin{cases}
        1 - \frac{\heal}{\mbeta - \abeta x_a}, &\text{if } \abeta x_a < \beta-\heal \\
        0, &\text{else}
    \end{cases}.
\end{equation}
In this section, we consider a system operator that makes  investment decisions to promote the active defense capabilities of the network.  Suppose the operator has a limited monetary budget $M > 0$. It decides to invest an amount of money $a \geq 0$ in increasing the total fraction $x_a$ of active defenders, with a return function  $h : \mbb{R}_{\geq 0} \rightarrow [0,1]$ that is increasing. This reflects a plausible scenario since active defenses are new tools and capabilities which would incur extra costs to install on network nodes. Likewise, it decides to invest an amount $b \geq 0$ in increasing the effectiveness of active defenders, with an expected return function $g : \mbb{R}_{\geq 0} \rightarrow \mbb{R}_{\geq 0}$. This reflects the costs associated with improving active defense tactics, where more advanced tactics incur higher costs (e.g., requiring the active defender to collect and process more data from its neighboring node it is trying to help out). The optimization problem that the operator faces is thus formulated as
\begin{equation}\label{eq:investment_problem}
    \begin{aligned}
        \min_{y = (a,b)} &L(h(a),g(b)) \\
        \text{s.t. } &a,b \geq 0 \\
        & a + b \leq M
    \end{aligned}
\end{equation}
This problem becomes trivial if there is a feasible pair $(a,b)$ that satisfies $g(b)h(a) \geq \beta - \heal$. In this case, the infection can be eradicated. As such, we will consider the cases where no feasible pair achieves this, i.e. $g(b)h(a) < \beta - \heal$ for all $a,b$ with $a + b \leq M$.

We will proceed by defining the following structural properties for the return functions.
\begin{definition}
    We denote the family of functions $\mcal{D}$ as the set of single-variable functions $f: \mbb{R}_{\geq 0} \rightarrow \mbb{R}_{\geq 0}$ that are strictly increasing, twice continuously differentiable, concave, and satisfies $f(0) = 0$.
\end{definition}

For the formulation of optimization problem \eqref{eq:investment_problem}, we place the following assumptions on the return functions $h,g$.
\begin{assumption}\label{assumption:concavity}
The function $g \in \mcal{D}$. The function $h(a)$ is of the form $h(a) = \min\{\hat{h}(a),1 \}$, where $\hat{h} \in \mcal{D}$. 
\end{assumption}
Concavity is a common assumption that describes marginal diminishing returns on investment. For the function $h(a)$, there is a value $t > 0$ for which monetary investments larger than $t$ will convert the entire population to become active defenders. It can be the case that $t = \infty$, in which case $h \in \mcal{D}$. The assumption $h(0) = g(0) = 0$ asserts that zero investment yields zero returns. The following result establishes that there exists a unique optimal investment that solves \eqref{eq:investment_problem}.

\begin{theorem}\label{thm:optimal_investment}
Suppose $g,h$ satisfy the properties of Assumption \ref{assumption:concavity}. Suppose $t \geq M$. Then \eqref{eq:investment_problem} has a unique solution $y^* = (a^*,b^*)$, where $a^* \in (0,M)$ satisfies
    \begin{equation}\label{eq:unique_equation}
        g(M-a^*)h'(a^*) = g'(M-a^*) h(a^*)
    \end{equation}
    and $b^* = M - a^*$. Suppose $t < M$. If $g'(M-t) \geq g(M-t)\hat{h}'(t)$, then the unique solution is given by the same $y^*$.  If $g'(M-t) < g(M-t)\hat{h}'(t)$, then the unique solution is given by $a^* = t$, $b^* = M-t$.
\end{theorem}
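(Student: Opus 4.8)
The plan is to reduce \eqref{eq:investment_problem} to a one-dimensional problem and then exploit log-concavity. In the relevant regime $g(b)h(a)<\mbeta-\heal$ we have $L(h(a),g(b))=1-\heal/(\mbeta-g(b)h(a))$, and since $p\mapsto 1-\heal/(\mbeta-p)$ is strictly increasing... rather, $L$ is strictly \emph{decreasing} in the product $p=g(b)h(a)$, so minimizing $L$ is equivalent to maximizing $\Pi(a,b)\triangleq g(b)h(a)$ over $\{a,b\ge 0,\ a+b\le M\}$. Because $g,h$ are strictly increasing with $g(0)=h(0)=0$, the optimal value $\max\Pi$ is strictly positive (take $a=b=M/2$), so any optimizer has $a>0$ and $b>0$; and if $a+b<M$ one could increase $a$ and strictly increase $\Pi$. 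Hence $a+b=M$ at optimality, and the problem becomes: maximize $\phi(a)\triangleq g(M-a)h(a)$ over $a\in[0,M]$. Note $\phi(0)=\phi(M)=0$ while $\phi>0$ on $(0,M)$, so a maximizer lies in the open interval.

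The uniqueness argument rests on the following fact: every $f\in\mcal{D}$ satisfies $f'>0$ everywhere (concavity makes $f'$ nonincreasing, and if $f'$ ever vanished it would be $\le 0$ thereafter, contradicting strict monotonicity) and $f(a)>0$ for $a>0$; consequently $\log f$ is strictly concave on $(0,\infty)$, i.e.\ $f'/f$ is strictly decreasing there, since $(f'/f)'=(f''f-(f')^2)/f^2<0$. Now, on any subinterval where $h$ is differentiable, consider the logarithmic derivative of $\phi$,
\begin{equation}
    \psi(a)\triangleq\frac{\phi'(a)}{\phi(a)}=\frac{h'(a)}{h(a)}-\frac{g'(M-a)}{g(M-a)}.
\end{equation}
The first term is strictly decreasing in $a$; the second term is strictly increasing in $a$ (since $g'/g$ is decreasing in its argument and $M-a$ decreases in $a$), so $\psi$ is strictly decreasing. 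Since $\phi>0$, the sign of $\phi'$ matches that of $\psi$; together with $\phi(0)=\phi(M)=0$, this forces $\phi$ to strictly increase then strictly decrease, so $\phi$ has a unique critical point $a^*$, which is the unique global maximizer and solves $\phi'(a^*)=0$, i.e.\ \eqref{eq:unique_equation}. (The boundary limits $h'/h\to+\infty$ as $a\to 0^+$ and $g'(M-a)/g(M-a)\to+\infty$ as $a\to M^-$, valid because $h(0)=g(0)=0$ and $h'(0),g'(0)>0$, give an alternative direct proof that $a^*$ is interior.)

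It remains to handle the kink of $h=\min\{\hat h,1\}$ at $t$. If $t\ge M$, then $h\equiv\hat h$ on $[0,M]$ and the argument above applies verbatim, yielding $y^*=(a^*,M-a^*)$. If $t<M$, split $[0,M]$ at $t$. On $[t,M]$, $h\equiv 1$, so $\phi(a)=g(M-a)$ is strictly decreasing and dominated by $\phi(t)$. On $[0,t]$, $\psi(a)=\hat h'(a)/\hat h(a)-g'(M-a)/g(M-a)$ is strictly decreasing with $\psi(0^+)=+\infty$, and its left value at $t$ is $\psi(t^-)=\hat h'(t)-g'(M-t)/g(M-t)$ (using $\hat h(t)=1$), whose sign equals that of $g(M-t)\hat h'(t)-g'(M-t)$. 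If $g'(M-t)\ge g(M-t)\hat h'(t)$, then $\psi$ has a unique zero $a^*\in(0,t]$ on $[0,t]$; combined with the strictly decreasing behavior on $[t,M]$, $\phi$ strictly increases then strictly decreases on $[0,M]$, so $a^*$ is the unique maximizer and satisfies \eqref{eq:unique_equation} (reading $h'(t)$ as the left derivative in the boundary case $a^*=t$). If $g'(M-t)<g(M-t)\hat h'(t)$, then $\psi>0$ on $(0,t)$, so $\phi$ strictly increases on $[0,t]$ and strictly decreases on $[t,M]$, giving the unique maximizer $a^*=t$, $b^*=M-t$.

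The main obstacle is uniqueness: $\phi$ is a product of two concave functions and is generally not concave, so no second-order test is available; the fix is to replace concavity of $\phi$ by log-concavity of $g$ and $h$ (equivalently, strict monotonicity of $\psi$), which pins down the critical point. The only other technical points are the elementary positivity facts about functions in $\mcal{D}$ needed for the boundary limits, and the non-smoothness of $h$ at $t$, which is dispatched by the explicit two-region decomposition.
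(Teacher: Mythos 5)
Your proof is correct, and it reaches the result by a somewhat different route than the paper. The paper sets up the Lagrangian/KKT machinery for \eqref{eq:investment_problem} (complementary slackness to rule out $a=0$ or $b=0$, Slater's condition for sufficiency of the KKT conditions, and an extra multiplier with a subderivative of $h$ at the kink $a=t$), and proves uniqueness of the stationary point by showing that $H_1(a)=g(M-a)h'(a)$ is strictly decreasing while $H_2(a)=g'(M-a)h(a)$ is strictly increasing, so they cross exactly once. You instead reduce directly to maximizing $\phi(a)=g(M-a)h(a)$ on $[0,M]$ after arguing that the budget constraint binds, and obtain uniqueness from strict log-concavity, i.e.\ strict monotonicity of $\psi=\phi'/\phi$; the kink is handled by splitting $[0,M]$ at $t$ rather than by subgradients. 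Both arguments are sound, and yours is more elementary and self-contained. One remark on your closing commentary: the obstacle you describe is not actually present. The paper's two monotonicity facts are precisely the statement that $\phi''=H_1'-H_2'<0$, so $\phi$ \emph{is} strictly concave on $(0,M)$; the product of a nonnegative concave increasing function with a nonnegative concave decreasing one is concave because the cross term $-2g'(M-a)h'(a)$ is negative. Your log-concavity argument rests on a strictly weaker property that also suffices, so nothing is wrong, but a second-order test was in fact available. Two trivial nits: the justification ``if $a+b<M$ one could increase $a$'' fails on the flat part $a\ge t$ of $h$ — there one increases $b$ instead, which still strictly increases the product since $g$ is strictly increasing and $h(a)=1>0$; and in the boundary case $a^*=t$ your reading of $h'(t)$ as the left derivative $\hat{h}'(t)$ is the right convention and matches the paper's subderivative treatment.
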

\begin{proof}
    First, observe that an investment profile with $a = 0$ and $b > 0$, or $a > 0$ and $b = 0$ cannot be optimal. Thus, an optimal investment must satisfy $a,b > 0$. By complementary slackness, the two multipliers associated with the non-negativity constraints on $a,b$ must be zero.

    Let $\lambda$ be the multiplier associated with the inequality constraint, $a + b \leq M$. Note that if $\lambda = 0$, the first-order conditions become $g(b) = 0$ and $h(a) = 0$, which has the unique solution $a = b = 0$. However, this cannot be an optimal solution. Thus, we suppose $\lambda > 0$. Here, the budget must be met, i.e. $a+b = M$. We obtain the first-order condition
    \begin{equation}\label{eq:FOC}
        H_1(a) = H_2(a)
    \end{equation}
    where $H_1(a) = g(M-a)h'(a)$ and $H_2(a) = g'(M-a) h(a)$. To see this, we must have
    \begin{equation}\label{eq:KKT_ZG}
        \begin{aligned}
            \frac{\heal g(M-a) h'(a)}{(\beta - g(M-a)h(a))^2} = \lambda = \frac{\heal h(a) g'(M-a)}{(\beta - g(M-a)h(a))^2}
        \end{aligned}
    \end{equation}
    Since we are considering the cases where $\beta - g(b)h(a) > \heal > 0$, the denominator may be eliminated.
    
    \noindent\textbf{Case 1: } $M \leq t$. For $a \in (0,M)$, $H_1(a)$ is strictly decreasing because
    \begin{equation}
        H_1'(a) = -g'(M-a)h'(a) + g(M-a) h''(a) < 0,
    \end{equation}
    which follows from the assumptions that $h$ and $g$ are strictly increasing, concave, and twice differentiable (Assumption \ref{assumption:concavity}). For $a \in (0,M)$, $H_2(a)$ is strictly increasing because
    \begin{equation}
        H_2'(a) = -g''(M-a)h(a) + g'(M-a) h'(a) > 0,
    \end{equation}
    which follows similarly. Moreover, $H_1(0) > 0$, $H_1(M) = 0$, and $H_2(0) = 0$, $H_2(M) > 0$. By the intermediate value theorem and by strict monotonicity of $H_1$ and $H_2$, there exists a unique solution $a^*$ to \eqref{eq:FOC}. 
    
    Because Slater's condition holds for \eqref{eq:investment_problem}, the KKT conditions are necessary and sufficient for optimality. Taking  $b^* = M-a^*$ and $\lambda^*$ in \eqref{eq:KKT_ZG}, we have showed that $(a^*,b^*,\lambda^*)$ uniquely satisfy the KKT conditions. 

    \noindent\textbf{Case 2: } $M > t$. In this case, no optimal solution will allocate $a > t$, since any excess allocation here may be invested in $b$. Let us re-formulate \eqref{eq:investment_problem} with the additional constraint $a \leq t$, and denote $\lambda_a \geq 0$ as its associated multiplier. If $g'(M-t) \geq g(M-t)\hat{h}'(t)$, a unique solution $a^* \in (0,t]$ satisfies \eqref{eq:KKT_ZG}, which is the same solution as Case 1. 
    Now suppose $g'(M-t) < g(M-t)\hat{h}'(t)$. Here, note that no solution $a \in (0,t)$ to \eqref{eq:KKT_ZG} exists. Thus, considering $\lambda_a > 0$ ($a = t$), the first-order condition now becomes
    \begin{equation}
        \begin{aligned}
            \heal g(M-t)\partial h(t) &= \heal g'(M-t) \\
            &\quad+ \lambda_a(\beta-g(M-t))^2,
        \end{aligned}
    \end{equation}
    where $\partial h(t) \in [0,\hat{h}'(t)]$ is any sub-derivative of $h$ at $a=t$, and we used the fact that $h(t) = 1$. Taking any $\partial h(t)$ such that $g(M-t)\partial h(t) - g'(M-t) > 0$ (possible by assumption), we obtain
    \begin{equation}
        \lambda_a^* = \frac{\heal(g(M-t)\partial h(t) - g'(M-t))}{(\beta - g(M-t))^2} > 0.
    \end{equation}
    Lastly, we recover $\lambda^* = \frac{\alpha h(t) g'(M-t)}{(\beta - g(M-t))^2}$. Thus, $a^* = t$, $b^* = M-t$, $\lambda^*$, and $\lambda_a^*$ satisfy the KKT conditions, and thus $a^*,b^*$ are solutions of \eqref{eq:investment_problem}.
\end{proof}

\begin{example}
    For linear returns on investment, $h(a) = \min\{c_1 a, 1\}$ and $g(b) = c_2 b$ for constants $c_1,c_2 > 0$. From Theorem \ref{thm:optimal_investment}, if $1/c_1 > M$, we have $a^* = b^* = M/2$. If $M/2 \leq 1/c_1 < M$, we also obtain  $a^* = b^* = M/2$. If $1/c_1 < M/2$, then $a^* = 1/c_1$ and $b^* = M - 1/c_1$.
\end{example}

\begin{example}
    Suppose $h(a) = \frac{a}{a+c_1}$ and $g(b) = \bar{\beta}\frac{b}{b+c_2}$ for constants $c_1,c_2,\bar{\beta} > 0$. From \eqref{eq:unique_equation}, we obtain $a^* = \frac{c_1(M+c_2)}{c_1-c_2}\left[ 1 - \sqrt{1 - \frac{(c_1-c_2)M}{c_1(M+c_2)}}\right] < M$ if $c_1 \neq c_2$, and $a^* = M/2$ if $c_1 = c_2$.
\end{example}


\section{A-SIR dynamics of active cyber defense}

In this section, we investigate the impact of active defenders in the SIR (susceptible-infected-recovered) epidemics model. Here, a node that has been cleared of infectious malware obtains permanent protection against any future infection. The protection can be conferred either through reactive defenses (with rate $\heal$), or through active defenses (with rate $\abeta$). As such, we keep track of the five states $s_a(t)$, $s_r(t)$, $i_a(t)$, $i_r(t)$, and $r(t)$. Note that we need not differentiate between active and reactive protected nodes, since they are effectively removed from any further interactions.

\begin{equation}\label{eq:SIP_dynamics}
    \begin{aligned}
        \frac{ds_a}{dt} &= -\mbeta \cdot s_a i \\
        \frac{ds_r}{dt} &= -\mbeta \cdot s_r i \\
        \frac{di_a}{dt} &= \underbrace{\mbeta\cdot s_a i}_{\text{malware infection}}- \underbrace{\abeta\cdot s_a i_a}_{\text{active defense}} - \underbrace{\heal i_a}_{\text{reactive defense}} \\
        \frac{di_r}{dt} &= \mbeta\cdot s_r i - \abeta\cdot s_a i_r - \heal i_r \\
        \frac{dr}{dt} &= \abeta\cdot s_a i + \heal i \\
    \end{aligned}\tag{A-SIR}
\end{equation}
with initial conditions $(s_{a0},s_{r0},i_{a0},i_{r0}) \in [0,1]^4$ that satisfy $s_{a0}+s_{r0}+i_{a0}+i_{r0}=1$. Here, we are assuming that no node is initially fully protected from the malware, meaning $r(0) = 0$. We consider the class of initial value problems with the above dynamics \eqref{eq:SIP_dynamics} parameterized by a fixed fraction of initially infected nodes $i_0$ and the fraction of active defenders in the population, namely $s_{a0}$. Note that since no node is initially fully protected, we have $s_0 = s_{a0} + s_{r0} = 1 - i_0$. Also, observe that the roles of infected active or infected reactive nodes are indistinguishable (both types infect susceptibles at the same rate, and both attain protection at the same rates). Indeed, the total infected fraction $i$ depends only on $i$, but not on $(i_a,i_r)$:
\begin{equation}
    \frac{di}{dt} = \mbeta\cdot si - \abeta \cdot s_a i - \heal i.
\end{equation}
Therefore, we only consider the initial fraction of active susceptible nodes, namely $s_{a0}$.

The following Theorem characterizes the peak infection level for any initial value problem of the A-SIP dynamics.

\begin{theorem}
    Consider any initial value problem of the A-SIP dynamics, and denote by $i(t)$, $t \geq 0$ the resulting state trajectory for the total infected fraction of nodes. Then $\ipeak \triangleq \max_{t\geq 0} i(t)$ is characterized by
    \begin{equation}\label{eq:peak}
        \ipeak = 
        \begin{cases}
            1 - \frac{\heal}{\mbeta} - \frac{\abeta}{\mbeta}s_{a0} + \frac{\heal}{\mbeta}\log\frac{\heal}{\mbeta s_0 - \abeta s_{a0}}, &\text{if } s_{a0} < \frac{\mbeta s_0 - \heal}{\abeta} \\
            i_0, &\text{if } s_{a0} \geq \frac{\mbeta s_0 - \heal}{\abeta}
        \end{cases}.
    \end{equation}
\end{theorem}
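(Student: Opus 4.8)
The plan is to derive a conserved-type relation between the total infected fraction $i$ and the total susceptible fraction $s = s_a + s_r$, mimicking the classical argument for the peak of the SIR model, but keeping track of the extra active-defense term. The peak occurs where $\tfrac{di}{dt} = 0$, i.e. where $\beta s i - \alpha_a s_a i - \alpha i = 0$; dividing by $i$ (valid while $i > 0$) this is the condition $\beta s - \alpha_a s_a = \alpha$, equivalently $s - \tfrac{\alpha_a}{\beta}s_a = \tfrac{\alpha}{\beta}$. So the first step is to understand how $s$ and $s_a$ evolve jointly.

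\textbf{Reducing to a single susceptible variable.} From \eqref{eq:SIP_dynamics}, $\tfrac{ds_a}{dt} = -\beta s_a i$ and $\tfrac{ds_r}{dt} = -\beta s_r i$, so $\tfrac{d}{dt}\log s_a = \tfrac{d}{dt}\log s_r = -\beta i$. Hence $s_a(t)/s_a(0)$ and $s_r(t)/s_r(0)$ are equal for all $t$; writing $\rho \triangleq s_{a0}/s_0$ we get $s_a(t) = \rho\, s(t)$ and $s_r(t) = (1-\rho)s(t)$ for all $t$. This is the key structural simplification: the "active susceptible" fraction stays a fixed proportion of all susceptibles. Substituting into the $i$-equation, $\tfrac{di}{dt} = (\beta - \alpha_a\rho) s\, i - \alpha i = \kappa s\, i - \alpha i$ where $\kappa \triangleq \beta - \alpha_a \rho = \tfrac{\beta s_0 - \alpha_a s_{a0}}{s_0}$. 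This is exactly a standard SIR infection dynamic with effective transmission rate $\kappa$ and recovery rate $\alpha$, together with $\tfrac{ds}{dt} = -\beta s\, i$.

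\textbf{The orbit equation and the peak.} Dividing $\tfrac{di}{dt}$ by $\tfrac{ds}{dt}$ gives $\tfrac{di}{ds} = -\tfrac{\kappa}{\beta} + \tfrac{\alpha}{\beta s}$, which integrates (since $s(0) = s_0$, $i(0) = i_0$) to
\begin{equation}
    i(t) = i_0 - \tfrac{\kappa}{\beta}\bigl(s(t) - s_0\bigr) + \tfrac{\alpha}{\beta}\log\tfrac{s(t)}{s_0}.
\end{equation}
Since $s$ is strictly decreasing while $i>0$, and $i$ is unimodal in $s$ (its $s$-derivative $-\tfrac{\kappa}{\beta} + \tfrac{\alpha}{\beta s}$ vanishes exactly at $s = \alpha/\kappa$ and changes sign there), the peak is attained at $s_{\mathrm{pk}} = \alpha/\kappa = \tfrac{\alpha s_0}{\beta s_0 - \alpha_a s_{a0}}$ provided this value is actually reached, i.e. $s_{\mathrm{pk}} < s_0$, which rearranges to $s_{a0} < \tfrac{\beta s_0 - \alpha}{\alpha_a}$ (using $\beta s_0 - \alpha_a s_{a0} > 0$, which holds in this regime). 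Plugging $s_{\mathrm{pk}}$ into the orbit equation and using $s_0 = 1 - i_0$ to eliminate $i_0$ yields, after simplification, the first branch of \eqref{eq:peak}. In the complementary case $s_{a0} \geq \tfrac{\beta s_0 - \alpha}{\alpha_a}$, we have $\tfrac{di}{dt}\big|_{t=0} = (\kappa s_0 - \alpha) i_0 \leq 0$ and, since $s$ only decreases, $\tfrac{di}{dt} < 0$ thereafter; so $i$ is nonincreasing and $i_{\mathrm{pk}} = i_0$.

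\textbf{Main obstacle.} The genuinely substantive step is the observation $s_a(t) = \rho\, s(t)$, which collapses the five-dimensional system to the classical planar SIR phase portrait; once that is in hand the rest is the textbook SIR peak computation plus bookkeeping with $s_0 = 1-i_0$. A minor care point is verifying the sign of $\beta s_0 - \alpha_a s_{a0}$ (hence that $\kappa > 0$ and $s_{\mathrm{pk}} > 0$) exactly on the threshold $s_{a0} < \tfrac{\beta s_0 - \alpha}{\alpha_a}$, and confirming $s(t)$ does not hit $0$ before the peak (it cannot, since $i \to 0$ forces $s$ to a positive limit, and the peak value $s_{\mathrm{pk}}$ lies strictly between that limit and $s_0$). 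I would also note the degenerate edge case $s_{a0} = \tfrac{\beta s_0 - \alpha}{\alpha_a}$ where $s_{\mathrm{pk}} = s_0$, so both branches agree with $i_{\mathrm{pk}} = i_0$.
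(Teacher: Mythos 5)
Your proof is correct and follows essentially the same route as the paper's: both rest on the observation that $s_a$, $s_r$, and $s$ remain in fixed proportion (since each decays at the per-capita rate $\beta i$), which reduces the system to a classical SIR orbit equation whose peak is found where $di/ds$ vanishes. The only cosmetic difference is that you parameterize the orbit by $s$ while the paper uses $s_a$; the resulting computations are identical up to relabeling.
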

\begin{proof}
    We can characterize the trajectory of $i$ as a function of $s_a$ as follows. Observe that
    \begin{equation}
        \frac{ds_a}{ds_r} = \frac{s_a}{s_r}.
    \end{equation}
    Integrating this equation, we obtain the relation
    \begin{equation}
        s_a(s_r) = \frac{s_a(s_{r0})}{s_{r0}}\cdot s_r = \frac{s_{a0}}{s_0-s_{a0}}\cdot s_r,
    \end{equation}
    where we denote $s_0 = s_{a0} + s_{r0}$. Using this relation, we may write
    \begin{equation}
        \frac{di}{dt} = \left(\mbeta\frac{s_0}{s_{a0}} - \abeta \right) s_a i - \heal i,
    \end{equation}
    and subsequently,
    \begin{equation}\label{eq:di_dsa}
        \frac{di}{ds_a} = -\frac{A}{\mbeta} + \frac{\heal}{\mbeta s_a},
    \end{equation}
    where $A := \mbeta\frac{s_0}{s_{a0}} - \abeta$. Observe that function $i(s_a)$ is concave in $s_a \in (0,\infty)$. The peak of the trajectory $i(s_a)$ is obtained by setting the above equation to $0$, and solving for $s_a$. 
    
    In the case $A \leq 0$, or equivalently $s_{a0} \geq \frac{\mbeta s_0}{\abeta}$, function $i(s_a)$ does not have a peak for any $s_a \geq 0$. Therefore, the infected fraction $i(t)$ is monotonically decreasing in time until eradication, and hence $\max_{t\geq 0} i(t) = i_0$.

    In the case $A > 0$, function $i(s_a)$ peaks  precisely at $s_a^* = \frac{\heal}{A}$. Since it must hold that $s_a \in [0,s_{a0})$ throughout the entire trajectory, the infected fraction $i(t)$ peaks after the initial time if and only if $\frac{\heal}{A} < s_{a0}$. This is equivalent to the condition $s_{a0} < \frac{\mbeta s_0 - \heal}{\abeta}$. To determine the peak infection level, we integrate \eqref{eq:di_dsa} to obtain
    \begin{equation}
        i(s_a) = i_0 - \frac{A}{\mbeta}(s_a-s_{a0}) + \frac{\heal}{\mbeta}\log\frac{s_a}{s_{a0}}.
    \end{equation}
    Evaluating at $s_a^* = \frac{\heal}{A}$, the peak value is 
    \begin{equation}
        i^* = 1 - \frac{\heal}{\mbeta} - \frac{\abeta}{\mbeta}s_{a0} + \frac{\heal}{\mbeta}\log\frac{\heal}{\mbeta s_0 - \abeta s_{a0}}
    \end{equation}
    where we have used the fact that $s_0 + i_0 = 1$. Lastly, if $\frac{\heal}{A} \geq s_{a0}$, or equivalently $s_{a0} \geq \frac{\mbeta s_0 - \heal}{\abeta}$, then the infected fraction $i(t)$ is monotonically decreasing, and hence $i^* = i_0$.
\end{proof}
The above result indicates that the infection level will not exceed $i_\text{pk}$ for any time $t \geq 0$. Thus, if the objective of a system operator is to ensure the network never reaches an infectivity level above a certain desired threshold $\tau$, then equation \eqref{eq:peak} can help specify design parameters (e.g. $\beta_a$, $s_{a0}$, $\alpha$) on defense that meet this requirement. A precise characterization will be left for future work. 

\section{Conclusion}
Active cyber defense is an emerging technology. We have proposed a novel active cyber defense model based on the epidemiological SIS population dynamics. We fully characterized the behavior of the dynamics, establishing global asymptotic stability of infection-free and endemic fixed points. We show that deploying active cyber defenses has an impact on the epidemic threshold, unlike other mitigation approaches studied in epidemic models such as reactive social distancing. We further leverage the characterization to determine optimal investments in active cyber defense. As a side-product, we also characterize the effect of deploying active cyber defense in an SIR model. We hope this study will inspire more investigations on the effectiveness of active cyber defense, which is a new paradigm in cyber defense that could be a game-changer in cybersecurity.

\bibliographystyle{IEEEtran}
\bibliography{sources}



\end{document}